\newcommand{\qed}{\hfill \ensuremath{\Box}}
\newtheorem{lemmx}{Lemma}
\newtheorem{thmx}{Theorem}
\newenvironment{proof}{{\bf Proof:}}{\qed}
\renewcommand{\Pr}{\mbox{\rm \bf{Pr}}}
\begin{document}
\title{Consistent Subset Sampling\thanks{This work is supported by the Danish National Research Foundation under the Sapere Aude program.}}
\date{}
\author{Konstantin Kutzkov\thanks{\tt{kutzkov@gmail.com}}}
\author{Rasmus Pagh\thanks{\tt{pagh@itu.dk}}}
\affil{IT University of Copenhagen, Denmark}
\maketitle

\begin{abstract}
Consistent sampling is a technique for specifying, in small space, a subset $S$ of a potentially large universe $U$ such that the elements in $S$ satisfy a suitably chosen sampling condition. Given a subset $\mathcal{I}\subseteq U$ it should be possible to quickly compute $\mathcal{I}\cap S$, i.e., the elements in $\mathcal{I}$ satisfying the sampling condition. Consistent sampling has important applications in similarity estimation, and estimation of the number of distinct items in a data stream.

In this paper we generalize consistent sampling to the setting where we are interested in sampling size-$k$ subsets occurring in some set in a collection of sets of bounded size $b$, where $k$ is a small integer. This can be done by applying standard consistent sampling to the $k$-subsets of each set, but that approach requires time $\Theta(b^k)$. Using a carefully designed hash function, for a given sampling probability $p \in (0,1]$, we show how to improve the time complexity to $\Theta(b^{\lceil k/2\rceil}\log \log b + pb^k)$ in expectation, while maintaining strong concentration bounds for the sample. The space usage of our method is $\Theta(b^{\lceil k/4\rceil})$. 

We demonstrate the utility of our technique by applying it to several well-studied data mining problems. We show how to efficiently estimate the number of frequent $k$-itemsets in a stream of transactions and the number of bipartite cliques in a graph given as incidence stream. Further, building upon a recent work by Campagna et al., we show that our approach can be applied to frequent itemset mining in a parallel or distributed setting. We also present applications in graph stream mining.

\end{abstract}

\section{Introduction}
\label{intro}
{\em Consistent sampling\/} is an important technique for constructing randomized sketches (or ``summaries'') of large data sets. The basic idea is to decide whether to sample an element $x$ depending on whether a certain sampling condition is satisfied. Usually, consistent sampling is implemented using suitably defined hash functions and $x$ is sampled if its hash value $h(x)$ is below some threshold. If $x$ is encountered several times, it is therefore either {\em never\/} sampled or {\em always\/} sampled. The set of items to sample is described by the definition of the hash function, which is typically small.

Consistent sampling comes in two basic variations: 
In one variation (sometimes referred to as {\em subsampling}) there is a fixed sampling probability~$p \in (0,1)$, and elements in a set must be sampled with this probability. In the alternative model the sample size is fixed, and the sampling probability must be scaled to achieve the desired sample size.

Depending on the strength of the hash function used, the sample will exhibit many of the  properties of a random sample (see e.g.~\cite{minwise,indyk-minwise}).
One of the most famous applications of consistent sampling~\cite{Broderetal97} is estimating the {\em Jaccard similarity} of two sets by the similarity of consistent samples, using the same hash function.
Another well-known application is reducing the number of distinct items considered to $\Theta(1/\varepsilon^2)$ in order to make an $(1\pm \varepsilon)$-approximation of the total number of distinct items (see~\cite{opt_dist_elmts} for the state-of-the-art result).

In this paper we consider consistent sampling of certain {\em implicitly defined\/} sets. That is, we sample from a set much larger than the size of the explicitly given database. Our main focus is on streams of sets, where we want to sample subsets of the sets in the stream.  Figure~\ref{fig:example} shows an example result of such sampling.

\begin{figure}[!h]
{\footnotesize\em
\begin{tabular}{p{.55\linewidth}p{.5\linewidth}}
	btw honorable arena constructive\newline
	geps honorable chocola alexander\newline
	geps compared alarm joseph\newline
	granholm honorable drama bother\newline
	granholm flight globe accuracy\newline
	abandoned career crooks abortions\newline
	generation contract attract fitzpatrick\newline
	foolish honorable briefly coattails\newline
	foolish hiring believes fans\newline
	examples internet bipartisan annual\newline
	gasoline corruption criminals connections\newline
	gasoline guilty iran jennifer\newline
	gasoline fewer dinner analyses\newline
	advertisements assertions dec influential\newline
	advertisements jennifer approval injured\newline
	allegation cuts breaking intentions\newline
	ironically dhinmi developed identified\newline
	checks conclusions buildings combined\newline
	handle capitol design israel\newline
	board damaged formed falls\newline
	assertions conventions declined disagreement\newline
	editor judges household district\newline
	editor christians caution castro\newline
	decline hiring gear heartland\newline
	cuts introduced coattails collapsed\newline
	conventions gray circles definition\newline
	grief abuse dkosopedia buck\newline
	grief constructive considered definition\newline
	ending illness hostile johnson\newline
	demonstrate delegation coattails await
&
	buildings employed collapsed employers\newline
	exciting combined accuracy huh\newline
	accidentally adopt drops keever\newline
	competing eye introduced began\newline
	abraham administrations began governor\newline
	immigrants ambush confederate chafee\newline
	immigrants attract abortions hacks\newline
	gubenatorial fire blogpac fighting\newline
	dcalif entered includes governor\newline
	district includes grown blitzer\newline
	collins contenders enjoy connections\newline
	household gear arena atlarge\newline
	christians israel exam believes\newline
	israel club dust governor\newline
	israel alqaida anticipated bear\newline
	dec ambush believes cancelled\newline
	club bradnickel abuse damaged\newline
	bipartisan foe commenting emailed\newline
	influential bandwagon dubya adopt\newline
	important gadflyer envoy abortions\newline
	devastating flight death fulltime\newline
	hmm jennifer assistant falls\newline
	alarm discusses investigations earn\newline
	corner bother attract intentions\newline
	corner california attract dkosopedia\newline
	andrew fighting dkosopedia elect\newline
	contribution grown envoy johnson\newline
	assistant abuse believes changing\newline
	documentation fulltime began dubya\newline
	craft ignorant alexander headquarters	
\end{tabular}
}
\caption{Random sample of 4-word sets of frequency at least 1\% on the Kos blog (data set from the UCI Machine Learning Repository). There are 491134490 such sets, and the above represents a sample of a fraction $10^{-7}$ of these. Our algorithm samples sets of size 4 by identifying disjoint subsets of size 2 that collide under a suitably chosen hash function, resulting in pairwise independent samples. We use a technique from subset-sum algorithms to identify collisions using space linear in the number of distinct items, rather than the number of distinct pairs. In the Kos data set there are 6906 distinct words (excluding stop words), so there is a great difference between linear and quadratic time.
}\label{fig:example}
\end{figure}


We demonstrate the usability of our technique by designing new algorithms for several well-studied counting problems in the streaming setting. We present the first nontrivial algorithm for the problem of estimating the number of frequent $k$-itemsets with rigorously understood complexity and error guarantee and also give a new algorithm for counting bipartite cliques in a graph given as an incidence stream. Also, using a technique presented in~\cite{pairse}, we show that our approach can be easily parallelized and applied to frequent itemset mining algorithms based on hashing~\cite{count_sketch,count_min}.
\section{Preliminaries}

\paragraph{\bf Notation.}
Let $\mathcal{C} = T_1,.., T_m$ be a collection of $m$ subsets of a ground set $\mathcal{I}$, $T_j \subseteq \mathcal{I}$, where $\mathcal{I}=\{1,\dots,n\}$ is a set of {\em elements}. The sets $T_j$ each contain at most $b$ elements, i.e., $|T_j| \le b$, and in the following are called {\em $b$-sets}.
Let further $S \subseteq \mathcal{I}$ be a given subset. If $|S| = k$ we call $S$ a {\em $k$-subset}. We assume that the $b$-sets are explicitly given as input while a $k$-subset can be any subset of $\mathcal{I}$ of cardinality $k$. In particular, a $b$-set with $b$ elements contains $b \choose k$ distinct $k$-subsets for $k \le b$. The {\em frequency} of a given  $k$-subset is the number of $b$-sets containing it. 

In order to simplify the presentation, we assume a lexicographic order on the elements in $\mathcal{I}$ and a unique representation of subsets as ordered vectors of elements. However, we will continue to use standard set operators to express computations on these vectors.
In our algorithm we will consider only lexicographically ordered $k$-subsets. 
For two subsets $I_1, I_2$ we write $I_1 < I_2$ iff $i_1 < i_2$ $\forall i_1 \in I_1, i_2 \in I_2$.


The set of $k$-subsets of $\mathcal{I}$ is written as $\mathcal{I}^k$ and similarly, for a given $b$-set $T_j$, we write $T^k_j$ for the family of $k$-subsets occurring in $T_j$. A family of $k$-subsets
$\mathcal{S} \subset \mathcal{I}^k$ is called a {\em consistent sample} for a given sampling condition $P$ if for each $b$-set $T_i$ the set $\mathcal{S} \cap T_i^k$ is sampled, i.e., all elements satisfying the sampling condition $P$ that occur in $T_i$ are sampled. The sampling condition $P$ will be defined later.

Let $[q]$ denote the set $\{0,\dots, q-1\}$ for $q \in \mathbb{N}$.
A hash function $h: \mathcal{E} \rightarrow [q]$ is $t$-wise independent iff $\Pr[h(e_1) = c_1 \wedge h(e_2) = c_2 \wedge \dots \wedge h(e_t) = c_t] = 
q^{-t}$
for distinct elements $e_i \in \mathcal{E}$, $1 \leq i \leq t$, and $c_i \in [q]$. We denote by $p=1/q$ the sampling probability we use in our algorithm. Throughout the paper we will often exchange $p$ and $1/q$.

We assume the standard computation model and further we assume that one element of $\mathcal{I}$ can be written in one machine word.

\paragraph{Probability theory.}
We assume that the reader is familiar with basic definitions from probability theory. In the analysis of our algorithms we will use these inequalities:

\begin{itemize}
\item {\em Markov's inequality} Let $X$ be a random variable. Then for every $\lambda > 1$:
\begin{eqnarray}
\Pr[X \geq \lambda] \leq \frac{\mathbb{E}[X]}{\lambda}
\end{eqnarray} 

\item {\em Chebyshev's inequality.} Let $X$ be a random variable and $\lambda>0$. Then

\begin{eqnarray} \label{chebyshev}
\Pr[|X-\mathbb{E}[X]| \geq \lambda] \leq \frac{\mathbb{V}[X]}{\lambda^2}
\end{eqnarray}

\item {\em Chernoff's inequality.} We will use the following form of the inequality:

Let $X_1, \ldots, X_\ell$ be $\ell$ independent identically distributed Bernoulli random variables and $\mathbb{E}[X_i] = \mu$. Then for any $\varepsilon > 0$ we have 
\begin{eqnarray} \label{chernoff}
\Pr[|\frac{1}{\ell}\sum_{i=1}^\ell X_i - \mu| > \varepsilon\mu] \leq 2e^{-\varepsilon^2 \mu \ell/2}
\end{eqnarray}
\end{itemize}

\paragraph{\bf Example.} In order to simplify further reading let us consider a concrete data mining problem. Let $\mathcal{T}$ be a stream of $m$ transactions $T_1, T_2, \dots, T_m$ each of size $b$. Each such transaction is a subset of the ground set of items $\mathcal{I}$. We consider the problem of finding the set of frequent $k$-itemsets, i.e., subsets of $k$ items occurring in at least $t$ transactions for a user-defined $t\leq m$.   As a concrete example consider a supermarket. The set of items are all offered goods and transactions are customers baskets. Frequent 2-itemsets will provide knowledge about goods that are frequently bought together.

The problem can be phrased in terms of the above described abstraction by associating transactions with $b$-sets and $k$-itemsets with $k$-subsets. Assume we want to sample 2-itemsets. A consistent sample can be described as follows: for a hash function $h: \mathcal{I} \rightarrow [q]$ we define $S$ to be the set of all 2-subsets $(i, j)$ such that $h(i) + h(j) = 0 \text{ mod } q$. In each $b$-set we can then generate all $b \choose 2$ 2-subsets and check which of them satisfy the so defined sampling condition. For a suitably defined hash function, one can show that resulting sample is ``random enough" and can provide important information about the data, for example, we can use it to estimate the number of 2-itemsets occurring above a certain number of times. 
%
%
%
%
%
%
%


%
%
%
%
%
%

\section{Our contribution}

\subsection{Time-space trade-offs revisited.}

Streaming algorithms have traditionally been mainly concerned with space usage.
An algorithm with a superior space usage, for example polylogarithmic, has been considered superior to an algorithm using more space but less computation time.
We would like to challenge this view, especially for time complexities that are in the polynomial (rather than polylogarithmic) range.
The purpose of a scalable algorithm is to allow the largest possible problem sizes to be handled (in terms of relevant problem parameters).
A streaming algorithm may fail either because the processing time is too high, or because it uses more space than what is available.
Typically, streaming algorithms should work in space that is small enough to fit in fast cache memory, but there is no real advantage to using only 10\% of the cache.
Looking at high-end processors over the last 20 years, see for example~\url{http://en.wikipedia.org/wiki/Comparison_of_Intel_Processors}, reveals that the largest system cache capacity and the number of instructions per second have developed rather similarly (with the doubling time for space being about 25\% larger than the doubling time for number of instructions).
Assuming that this trend continues, a future processor with $x$ times more processing power will have about $x^{0.8}$ times larger cache.
So informally, whenever we have $S=o(T^{0.8})$ for an algorithm using time $T$ and space $S$ the space will not be the asymptotic bottleneck.

\subsection{Main Result.}

In this paper we consider consistent sampling of certain implicitly defined sets, focusing on size-$k$ subsets in a collection of $b$-sets. The sampling is consistent in the sense that each occurrence of a $k$-subset satisfying the sampling condition is recorded in the sample.

\begin{thmx} \label{thm:main}
For each integer $k\geq 2$ there is an algorithm computing a consistent, pairwise independent sample of $k$-subsets from a given $b$-set in expected time $O(b^{\lceil k/2 \rceil}\log \log b + pb^k)$ and space $O(b^{\lceil k/4 \rceil})$ for a given sampling probability $p$, such that $1/p = O(b^k)$ and $p$ can be described in one word. An element of the sample is specified in $O(k)$ words.
\end{thmx}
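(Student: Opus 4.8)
The plan is to reduce consistent sampling of $k$-subsets to a sum-matching (subset-sum) problem and then enumerate the satisfying subsets by a meet-in-the-middle search. Fix a prime $q$ with $q=1/p$ and a $2k$-wise independent hash function $h:\mathcal{I}\to[q]$, and declare a lexicographically ordered $k$-subset $S$ to be sampled iff $\sum_{s\in S}h(s)\equiv 0\pmod q$. Since a single uniform summand over the field $\mathbb{F}_q$ already randomizes the sum, $\Pr[S\text{ sampled}]=1/q=p$, and the sample is consistent because the condition depends only on $S$. For pairwise independence, take distinct $k$-subsets $S\neq S'$; they involve at most $2k$ distinct elements, whose hash values are fully independent and uniform by the choice of $h$. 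The two linear forms $\sum_{s\in S}h(s)$ and $\sum_{s\in S'}h(s)$ are linearly independent over $\mathbb{F}_q$ (each contains, with coefficient $1$, an element the other omits, since $|S|=|S'|$ and $S\neq S'$), so the induced map to $\mathbb{F}_q^2$ is surjective and the pair is jointly uniform; hence $\Pr[S,S'\text{ both sampled}]=1/q^2=\Pr[S]\Pr[S']$. This is the routine part.

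Next I would solve the enumeration. Split the ordered $S$ into its smaller half $L$ (the first $\lceil k/2\rceil$ elements) and larger half $R$ (the last $\lfloor k/2\rfloor$); since $S$ is lex-ordered this split is unique and satisfies $L<R$, so each $k$-subset corresponds to exactly one pair $(L,R)$. Writing $\sigma(X)=\sum_{x\in X}h(x)\bmod q$, the condition becomes $\sigma(R)\equiv-\sigma(L)$. A first, space-expensive solution tabulates $\sigma(L)$ for all $\Theta(b^{\lceil k/2\rceil})$ left halves in a table keyed by value and then, for each right half, looks up the bucket $-\sigma(R)$ and emits the valid, disjoint, correctly ordered matches, running in time $O(b^{\lceil k/2\rceil}+\text{output})$ but using $\Theta(b^{\lceil k/2\rceil})$ space.

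To reach space $O(b^{\lceil k/4\rceil})$ I would replace the tabulation by a Schroeppel--Shamir search: split $L$ into quarters $L_1,L_2$ and $R$ into $R_1,R_2$, each of size about $k/4$, and list the four families of quarter-subsets, each of size $\Theta(b^{\lceil k/4\rceil})$. The target is $\sigma(L_1)+\sigma(L_2)+\sigma(R_1)+\sigma(R_2)\equiv 0$. Produce the half-sum streams $\{\sigma(L_1)+\sigma(L_2)\}$ and $\{-\sigma(R_1)-\sigma(R_2)\}$ in increasing order without materializing them, using one integer priority queue per stream (a van Emde Boas / y-fast structure on the universe $[q]$, which since $1/p=O(b^k)$ supports each operation in $O(\log\log q)=O(\log\log b)$ for constant $k$); the queues hold $O(b^{\lceil k/4\rceil})$ entries, giving the space bound. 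A two-pointer merge of the two sorted streams finds all value-matches; the streams have length $\Theta(b^{\lceil k/2\rceil})$, so the merge pops $\Theta(b^{\lceil k/2\rceil})$ items at cost $O(\log\log b)$ each, and within each matching block it emits the cross product. Each matching $4$-tuple is checked for disjointness and the ordering $L_1<L_2<R_1<R_2$ and, if valid, reported in $O(k)$ words. Since the $\Theta(b^k)$ candidate $4$-tuples each meet the sum condition with probability $p$, the expected number of reported and discarded matches is $\Theta(p\,b^k)$, for total expected time $O(b^{\lceil k/2\rceil}\log\log b+p\,b^k)$.

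The main obstacle is the simultaneous time and space guarantee: the ordering and disjointness constraints cut across the four quarters and must be enforced during the streamed enumeration without either double-counting $k$-subsets or spending more than the output budget on rejected tuples, and the $\log\log b$ (rather than $\log b$) factor forces the priority queues to be integer structures over $[q]$ rather than comparison-based heaps. The probabilistic claims, by contrast, reduce to the linear-algebra argument above once $q$ is prime and $h$ is $2k$-wise independent. Small $k$ (where a half or quarter would be empty, e.g.\ $k=2,3$) is handled by degenerate one- or two-way versions of the same construction, and the edge case where many quarter-subsets share a hash value, so that a single merge block is large, is absorbed into the $p\,b^k$ output term.
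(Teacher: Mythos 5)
Your construction is, in substance, the paper's own: the same $2k$-wise independent hash on individual elements with a linear sampling condition mod $q$, pairwise independence via joint uniformity of two linear forms (Lemma~\ref{independence_lemma} of the paper argues exactly this by conditioning), and a Schroeppel--Shamir meet-in-the-middle over quarter-subsets driven by integer priority queues ($y$-fast tries) to get the $\log\log b$ factor. The differences are cosmetic: you test $\sigma(S)\equiv 0 \pmod q$ where the paper tests equality of the two half-sums, and you merge two value-sorted streams where the paper runs one priority queue against a sorted circular list and iterates over the hash values $r\in[q]$. (Your requirement that $q$ be prime is unnecessary --- every coefficient in the relevant linear forms is $1$, a unit in any $\mathbb{Z}_q$ --- and is in slight tension with the theorem's promise of supporting a given sampling probability $p$.)

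The genuine gap is the space bound in the cross-product step, which you explicitly defer: saying that large matching blocks are ``absorbed into the $pb^k$ output term'' accounts for the \emph{time} but not the \emph{space}. A matching block of your value-sorted stream has expected size $p\,b^{k/2}$, which is $\omega(b^{\lceil k/4\rceil})$ whenever $p \gg b^{-k/4}$; a two-pointer merge cannot emit the cross product of two such blocks without either buffering one of them (violating the $O(b^{\lceil k/4\rceil})$ space bound) or re-generating a destructively popped stream (violating the time bound). This is precisely the issue the paper's Lemma~\ref{lem:gener} exists to solve: because the tails come from a list sorted by hash value, the half-subsets with a fixed head and a fixed target value form a contiguous range of that list, so a block of \emph{any} size can be stored implicitly as one index pair per head, i.e.\ in $O(b^{\lceil k/4\rceil})$ space, and the cross product can be enumerated straight from this representation in time proportional to its size. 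Without this (or an equivalent device) your algorithm does not meet the claimed space bound for large sampling probabilities; with it, your proposal becomes a faithful variant of the paper's proof. Your handling of odd $k$ and of $k \bmod 4 = 2$ is likewise only gestured at, but there the paper's own reduction --- running two output procedures for $\lfloor k/2\rfloor$- and $\lceil k/2\rceil$-subsets in parallel --- fills the hole routinely.
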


Note that for the space complexity we do not consider the size of the computed sample. We will do this when presenting concrete applications of our approach.

For low sampling rates our method, which is based on hash collisions among $k/2$-subsets, is a quadratic improvement in running time compared to the na\"ive method that iterates through all $k$-subsets in a given $b$-set. 
In addition, we obtain a quadratic improvement in space usage compared to the direct application of the hashing idea.
Storing a single $2k$-wise independent hash function suffices to specify a sample, where every pair of $k$-subsets are sampled independently.

An important consequence of our consistent sampling algorithm is that it can be applied to $b$-sets revealed one at a time, thus it is well-suited for streaming problems. 



%
%
%
%
%
%

\section{Our approach} 

\subsection{Intuition}


A na\"ive consistent sampling approach works as follows: Define a pairwise independent hash function $h:\mathcal{I}^k \rightarrow [q]$, for a given $b$-set $T$ generate all $b \choose k$ $k$-subsets $I_k \in T^k$ and sample a subset $I_k$ iff $h(I_k) = 0$. Clearly, to decide which $I_k$ are sampled the running time is $O(b^k)$ and the space is $O(b)$ since the space needed for the description of the hash function for reasonably small sampling probability $p$ is negligible. A natural question is whether a better time complexity is possible.

Our idea is instead of explicitly considering all $k$-subsets occurring in a given $b$-set, to hash all elements to a value in $[q]$, $q = \lceil 1/p \rceil$ for a given sampling probability $p$. We show that the sampling of $k$-subsets is {\em pairwise independent\/} and for many concrete applications this is sufficient to consider the resulting sample ``random enough".

The construction of the hash function is at the heart of our algorithm and allows us to exploit several tricks in order to improve the running time. Let us for simplicity assume $k$ is even. Then we sample a given $k$-subset if the sum (mod $q$) of the hash values of its first $k/2$ elements equals the sum of the hash values of its last $k/2$ elements modulo $q$. The simple idea is to sort all $k/2$-subsets according to hash value and then look for collisions. Using a technique similar to the one of Schroeppel and Shamir for the knapsack problem~\cite{schrsham}, we show how by a clever use of priority queues one can design an algorithm with much better time complexity than the na\"ive method and quadratic improvement in the space complexity of the sorting approach. 



\subsection{The hash function}

Now we explain how we sample a given $k$-subset.
Let $h:\mathcal{I} \rightarrow [q]$ be a $2k$-wise independent hash function, $k \geq 2$. It is well-known, see for example ~\cite{hashing}, that such a function can be described in $O(k)$ words for a reasonable sampling probability, i.e., a sampling probability that can be described in one machine word.
 
We take a $k$-subset $(a_1, \dots, a_{\lfloor k/2 \rfloor}, a_{\lfloor k/2 \rfloor + 1}, \dots, a_k)$ in the sample iff $(h(a_1) + \dots + h(a_{\lfloor k/2 \rfloor})) \mbox{ mod } q = (h(a_{\lfloor k/2 \rfloor + 1})+ \dots + h(a_k)) \mbox{ mod } q$. Note that we have assumed a unique representation of subsets as sorted vectors and thus the sampling condition is uniquely defined.

For a given $k$-subset $I=(a_i, a_{i+1} \dots, \dots, a_{i+k-1})$, $i \in \mathcal{I}$, we denote by $h(a_i, a_{i+1} \dots, \dots, a_{i+k-1}))$  the value $(h(a_1) + h(a_{i+1}) + \dots + h(a_{i+k-1})) \mbox{ mod } q$.
We define the random variable $X_I$ to indicate whether a given $k$-subset $I = (a_1, \dots, a_k)$ will be considered for sampling:

\medskip

\begin{equation*}
X_{I} = 
\begin{cases}
 1, & \text{if $h(a_1\dots a_{\lfloor k/2 \rfloor}) = h(a_{\lfloor k/2 \rfloor + 1}\dots a_k)$,}
\\0, & \text{otherwise}
\end{cases}
\end{equation*}

\medskip

The following lemmas allow us to assume that from our sample we can obtain a reliable estimate with high probability:

\begin{lemmx} \label{hash_lemma}
Let $I$ be a $t$-subset with $t\leq k$. Then for a given $r \in [q]$, $\Pr[h(I) = r] = {1}/{q}$.
\end{lemmx}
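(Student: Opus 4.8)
The plan is to reduce the statement to the elementary fact that a sum modulo $q$ of independent uniform variables on $[q]$ is again uniform. First I would observe that since $h$ is $2k$-wise independent and $I=(a_1,\dots,a_t)$ consists of $t \le k \le 2k$ distinct elements, the values $h(a_1),\dots,h(a_t)$ are fully independent and each uniformly distributed on $[q]$; concretely, the definition of $t$-wise independence applied to the $2k$-wise independent $h$ gives $\Pr[h(a_1)=c_1 \wedge \dots \wedge h(a_t)=c_t] = q^{-t}$ for every choice of $c_1,\dots,c_t \in [q]$. Recall that by definition $h(I) = (h(a_1)+\dots+h(a_t)) \bmod q$, so the lemma asks precisely that this sum be uniform.

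Next I would establish uniformity of the sum by conditioning on all but one of the hash values. Fixing the values $h(a_2)=c_2,\dots,h(a_t)=c_t$ and writing $s=(c_2+\dots+c_t)\bmod q$, the event $h(I)=r$ is equivalent to $h(a_1)=(r-s)\bmod q$, which has conditional probability exactly $1/q$ because $h(a_1)$ is uniform on $[q]$ and independent of the remaining values. Since this conditional probability equals $1/q$ irrespective of $c_2,\dots,c_t$, the law of total probability gives $\Pr[h(I)=r]=\sum_{c_2,\dots,c_t} (1/q)\cdot \Pr[h(a_2)=c_2,\dots,h(a_t)=c_t] = 1/q$, where the final equality follows by factoring out the constant $1/q$ and summing the remaining probabilities to $1$.

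I expect no serious obstacle here; the only points requiring care are verifying that $2k$-wise independence indeed supplies the needed $t$-wise independence (it does, since $t \le k \le 2k$, so any $t$ distinct arguments are covered by the definition) and that the translation $x \mapsto (x+s)\bmod q$ is a bijection of $[q]$, which is what makes adding a constant preserve uniformity. Both are routine, so the argument above is essentially complete once these observations are recorded. An entirely equivalent induction on $t$ — the base case $t=1$ being immediate from uniformity of $h(a_1)$ and the inductive step peeling off one summand — would serve just as well, but the single conditioning step is cleaner.
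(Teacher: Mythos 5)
Your proof is correct and follows essentially the same route as the paper's: both deduce from the $2k$-wise independence of $h$ that the $t \le k$ hash values are independent and uniform on $[q]$, and then conclude that their sum modulo $q$ is uniform. The only difference is that you spell out the final step (conditioning on all but one summand and using that translation modulo $q$ is a bijection), which the paper simply asserts; this added detail is welcome but not a different argument.
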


\begin{proof} 

Since $h$ is $2k$-wise independent and uniform each of the $t \le k$ distinct elements is hashed to a value between $0$ and $q-1$ uniformly and independently from  the remaining $t-1$ elements. Thus, the sum (mod $q$) of the hash values of $I$'s $t$ elements is equal with probability $1/q$ to $r$. 
\end{proof}

\begin{lemmx} \label{sampling_lemma}
For a given $k$-subset $I$, $\Pr[X_I=1] = 1/q$.
\end{lemmx}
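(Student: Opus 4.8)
The plan is to reduce the event $\{X_I = 1\}$ to a statement about two independent, uniformly distributed half-sums and then sum over the possible common value. Write $I = (a_1, \dots, a_k)$ and split it into its first half $I_1 = (a_1, \dots, a_{\lfloor k/2\rfloor})$ and second half $I_2 = (a_{\lfloor k/2\rfloor+1}, \dots, a_k)$. By definition $X_I = 1$ exactly when $h(I_1) = h(I_2)$, where $h(I_1)$ and $h(I_2)$ denote the respective sums of hash values taken mod $q$. So it suffices to show $\Pr[h(I_1) = h(I_2)] = 1/q$.

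First I would record two facts that are immediate from the hypotheses. Both $I_1$ and $I_2$ are subsets of $\mathcal{I}$ of size at most $k$, so Lemma~\ref{hash_lemma} applies to each and gives that $h(I_1)$ and $h(I_2)$ are each uniform over $[q]$. Moreover $I_1$ and $I_2$ are disjoint (because the $a_j$ are distinct, as $I$ is a $k$-subset) and together involve exactly $k$ elements; since $k \le 2k$, the $2k$-wise independence of $h$ makes the hash values of all elements of $I_1 \cup I_2$ mutually independent, which in turn makes the two aggregated quantities $h(I_1)$ and $h(I_2)$ independent random variables. Note that the asymmetry between $\lfloor k/2\rfloor$ and $\lceil k/2\rceil$ for odd $k$ causes no difficulty, since Lemma~\ref{hash_lemma} holds for subsets of any cardinality up to $k$.

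With independence and uniformity in hand, I would finish by conditioning on the value of $h(I_2)$:
\begin{equation*}
\Pr[h(I_1) = h(I_2)] = \sum_{r \in [q]} \Pr[h(I_2) = r]\,\Pr[h(I_1) = r] = \sum_{r \in [q]} \Pr[h(I_2) = r]\cdot \frac{1}{q} = \frac{1}{q},
\end{equation*}
where the middle equality uses that $h(I_1)$ is uniform and independent of $h(I_2)$, and the last uses that the terms $\Pr[h(I_2) = r]$ sum to $1$. Equivalently, one may observe that the signed difference $h(I_1) - h(I_2) \bmod q$ is a sum of independent uniform terms, hence itself uniform on $[q]$, so it equals $0$ with probability $1/q$.

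The only point that needs genuine care, rather than being purely mechanical, is the independence claim: it relies on the two halves being element-disjoint and on the total element count $k$ never exceeding the independence parameter $2k$ of $h$, so that Lemma~\ref{hash_lemma} and full independence of the two half-sums can be invoked simultaneously. Everything else is routine.
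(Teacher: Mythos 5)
Your proof is correct and follows essentially the same route as the paper's: split $I$ into its two halves, use Lemma~\ref{hash_lemma} together with the $2k$-wise independence of $h$ to get uniformity and independence of the two half-sums, and sum the collision probability over the $q$ possible common values. You are merely more explicit than the paper about why the two half-sums are independent (disjointness of the halves and the total of $k \le 2k$ elements), which is a welcome clarification but not a different argument.
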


\begin{proof} 

Let $I = I_l \cup I_r$ with $|I_l| = \lfloor k/2 \rfloor$ and $|I_r| = \lceil k/2 \rceil$.  The hash value of each subset is uniquely defined, $h$ is $2k$-wise independent, and together with the result of Lemma~\ref{hash_lemma} we have $\Pr[h(I_l) = h(I_r) = r] = 1/q^2$ for a particular $r \in [q]$. Thus, we have $\Pr[h(I_l) = h(I_r) = 0 \vee  \dots \vee h(I_l) = h(I_r) = q-1] = \sum_{i=0}^{q-1} \Pr[h(I_l)=h(I_r) = i] = 1/q$. 
\end{proof}

\begin{lemmx} \label{independence_lemma}
Let $I_1$ and $I_2$ be two distinct $k$-subsets. Then the random variables $X_{I_1}$ and $X_{I_2}$ are independent.
\end{lemmx}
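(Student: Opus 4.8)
The plan is to reduce the claim to a single joint-probability computation and then exploit the $2k$-wise independence of $h$ together with the fact that two distinct $k$-subsets must differ in at least one element on each side. Since $X_{I_1}$ and $X_{I_2}$ are $\{0,1\}$-valued, independence follows once I show the product rule for the joint ``$=1$'' event, i.e. $\Pr[X_{I_1}=1 \wedge X_{I_2}=1] = \Pr[X_{I_1}=1]\cdot\Pr[X_{I_2}=1] = 1/q^2$, where the marginals are supplied by Lemma~\ref{sampling_lemma}; the remaining three cells of the joint distribution then factor automatically.

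First I would rewrite each sampling event as a signed linear form. Writing $I_j = I_{j,l} \cup I_{j,r}$, the event $X_{I_j}=1$ is equivalent to $Y_j := \sum_{a \in I_{j,l}} h(a) - \sum_{a \in I_{j,r}} h(a) \equiv 0 \pmod q$, that is, a sum $Y_j = \sum_{a} \epsilon_{j,a}\, h(a)$ with coefficients $\epsilon_{j,a} \in \{+1,-1\}$ for $a \in I_j$ and $\epsilon_{j,a}=0$ otherwise. The target becomes showing that $(Y_1,Y_2)$ is uniform over $[q]\times[q]$.

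Next I would note that $|I_1\cup I_2|\le 2k$, so by $2k$-wise independence the hash values $\{h(a): a\in I_1\cup I_2\}$ are mutually independent and uniform on $[q]$. Since $|I_1|=|I_2|=k$ and $I_1\neq I_2$, both $I_1\setminus I_2$ and $I_2\setminus I_1$ are nonempty; pick $u\in I_1\setminus I_2$ and $v\in I_2\setminus I_1$, which are necessarily distinct. The key structural observation is that $u\notin I_2$ and $v\notin I_1$, so $Y_1$ involves $h(u)$ with coefficient $\pm 1$ but does not involve $h(v)$, while $Y_2$ involves $h(v)$ with coefficient $\pm 1$ but not $h(u)$. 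I then condition on all hash values of $I_1\cup I_2$ except $h(u)$ and $h(v)$; under this conditioning $Y_1 = \pm h(u) + C_1$ and $Y_2 = \pm h(v) + C_2$ for constants $C_1,C_2$, while $h(u)$ and $h(v)$ stay independent and uniform. Because $\pm 1$ is a unit modulo $q$ and adding a constant is a bijection of $[q]$, each of $Y_1,Y_2$ is uniform, and they depend on disjoint fresh randomness, so $\Pr[Y_1=0\wedge Y_2=0]=1/q^2$ conditionally; the law of total probability then gives the same value unconditionally.

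The main obstacle to watch is precisely that $q$ need not be prime, so one cannot simply invoke field-style linear independence of the forms $Y_1,Y_2$. What rescues the argument is that the isolated variables $h(u),h(v)$ carry coefficient $\pm 1$, which is invertible in $[q]$ no matter whether $q$ is prime; this is why it is essential to isolate elements of the symmetric difference rather than reason about the two linear forms as a whole. A secondary point I would verify carefully is that elements of $I_1\cap I_2$, which may sit on different sides in the two subsets, only feed into the constants $C_1,C_2$ and thus cause no difficulty.
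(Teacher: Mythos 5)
Your proof is correct, and it takes a noticeably more careful route than the paper's. The paper also reduces the claim to showing $\Pr[X_{I_1}=1\wedge X_{I_2}=1]=1/q^2$, but it argues by splitting each subset into its left and right halves, conditioning on $h(I_{1l})=r_1$ and $h(I_{2l})=r_2$, and then asserting that the events $h(I_{1r})=r_1$ and $h(I_{2r})=r_2$ are independent because all elements involved are hashed independently. That assertion does not hold verbatim for every overlap pattern: for instance, if $I_{1r}=I_{2r}$ (two subsets sharing their right half but differing on the left), the two right-half events are \emph{not} conditionally independent given the left-half values, and if $I_{1r}=I_{2l}$ the first event becomes deterministic under the conditioning rather than having probability $1/q$; in both cases the product form $1/q^2$ only reappears after averaging over $r_1,r_2$, a step the paper glosses over with ``it is easy to see.'' Your argument avoids this weakness entirely: by writing each sampling event as a signed linear form and isolating ``private'' elements $u\in I_1\setminus I_2$ and $v\in I_2\setminus I_1$ (both nonempty since $|I_1|=|I_2|=k$ and $I_1\neq I_2$, and automatically distinct), you condition on all the other hash values and obtain genuine conditional independence from the fresh unit-coefficient variables $h(u),h(v)$. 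This works uniformly for every overlap pattern, for odd and even $k$, and for composite $q$, and your reduction of independence of $\{0,1\}$-valued variables to the product rule on the $(1,1)$ cell is valid. In short, what your approach buys is rigor exactly at the point where the paper's proof is hand-wavy, at the cost of only slightly more notational machinery.
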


\begin{proof} 

We show that $\Pr[X_{I_1} = 1 \wedge X_{I_2} = 1] = \Pr[X_{I_1}=1]\Pr[X_{I_2}=1] = 1/q^2$ for arbitrarily chosen $k$-subsets $I_1, I_2$. This will imply pairwise independence on the events that two given $k$-subsets are sampled since for a given $k$-subset $I$, $\Pr[X_I = 1] = 1/q$ as shown in Lemma \ref{hash_lemma}.

Let $I_1 = I_{1l} \cup I_{1r}$ and $I_2 = I_{2l} \cup I_{2r}$ with $|I_{il}| = \lfloor k/2 \rfloor$ and $|I_{ir}| = \lceil k/2 \rceil$. Let us assume without loss of generality that $h(I_{1l}) = r_1$ and $h(I_{2l}) = r_2$ for some $r_i \in [q]$. As shown in the previous lemmas for fixed $r_1$ and $r_2$, $\Pr[h(I_{1r}) = r_1] = \Pr[h(I_{2r}) = r_2] =1/q$. Since $h$ is $2k$-wise independent, all elements in $I_{1l} \cup I_{1r} \cup I_{2l} \cup I_{2r}$ are hashed independently of each other. Thus, it is easy to see that the event we hash $I_{2r}$ to $r_2$ is independent from the event that we have hashed $I_{1r}$ to $r_1$, thus the statement follows. 
\end{proof}

The above lemmas imply that our sampling will be uniform and pairwise independent.

\subsection{The algorithm} \label{sec:algorithm} 

\begin{figure}  
\vspace{-15mm}
\renewcommand{\thealgorithm}{}
{\sc ConsistentSubsetSampling}
\begin{algorithmic}[0]
\medskip
\REQUIRE $b$-set $T \subset \mathcal{I}$, a $2k$-wise independent $h: \mathcal{I} \rightarrow [q]$
\STATE Let $H = T^{k/4}$ be the $k/4$-subsets occurring in $T$.
\STATE Sort all $k/4$-subsets from $H$ in a circular list $L$ according to hash value.
\STATE Build a priority queue $P$ with $k/2$-subsets $I = I_H \cup I_L$ according to hash value, for $I_H \in H$, $I_L \in L$.
\FOR{$i \in [q]$}
\STATE $T^{k/2}_i = $ {\sc OutputNext($P, L, i$)}
\STATE Generate all $k$-subsets from $T^{k/2}_i$ satisfying the sampling condition (and consisting of $k$ different elements).
\ENDFOR
\end{algorithmic}


\bigskip
{\sc OutputNext}
\begin{algorithmic}[0]
\REQUIRE a circular list $L$, a priority queue $P$ of $k/2$-subsets $I = (I_H \cup I_L)$ compared by hash value $h(I)$,  $i \in \mathbb{N}$
\WHILE{there is $k/2$-subset with a hash value $i$}
\STATE Output the next $k/2$-subset $I = (I_H \cup I_L)$ from $P$.
\IF{ $\text{cnt}(I_H) < L.\text{length}$}
\STATE Replace $I$ by $I_H \cup I'_L$ in $P$ where $I'_L$ is the $k/4$-subset following $I_L$ in $L$.
\STATE Update the hash value of $I_H \cup I'_L$ and restore the PQ invariant.
\STATE $\text{cnt}(I_H)\text{++}$.
\ELSE 
\STATE Remove $I = (I_H \cup I_L)$ from $P$ and restore the PQ invariant.
\ENDIF
\ENDWHILE
\end{algorithmic}
\caption{A high-level pseudocode description of the algorithm. For simplicity we assume that $k$ is a multiple of 4. The letter $H$ stands for ``head", these are the $k/4$-subsets that will constitute the first half of $k/2$-subsets in $P$. We will always update the second half with a $k/4$-subset from $L$.} \label{fig:css}
\end{figure}

A pseudocode description of our algorithm is given in Figure~\ref{fig:css}. We explain how the algorithm works with a simple example. Assume we want to sample 8-subsets from a $b$-set $(a_1,\ldots,a_b)$ with $b>8$. We want to find all 8-subsets $(a_1,\dots,a_8)$ for which it holds $h(a_1,\dots,a_4)=h(a_5,\dots,a_8)$. As discussed, we assume a lexicographic order on the elements in $\mathcal{I}$ and we further assume $b$-sets are sorted according to this total order. The assumption can be removed by preprocessing and sorting the input. Since $\mathcal{I}$ is discrete, one can assume that each $b$-set can be sorted by the Han-Thorup algorithm in $O(b\sqrt{\log \log b})$ expected time and space $O(b)$~\cite{han_thorup} (for the general case of sampling $k$-subsets even for $k=2$ this will not dominate the complexity claimed in Theorem~\ref{thm:main}).  In the following we assume the elements in each $b$-set are sorted.  Recall we have assumed a total order on subsets, and all subsets we consider are sorted according to this total order. We will also consider only sorted subsets for sampling. 

We simulate a sorting algorithm in order to find all $4$-subsets with equal hash values. Let the set of $2$-subsets be $H$.  First, in {\sc ConsistentSubsetSampling} we generate all $b \choose 2$ 2-subsets and sort them according to their hash value in a circular list $L$ guaranteeing access in expected constant time. We also build a priority queue $P$ containing $b \choose 2$ $4$-subsets as follows: For each 2-subset $(a_i, a_j) \in H$ we find the 2-subset $(a_k, a_\ell) \in L$ such that $h(a_i,a_j, a_k, a_\ell)$ is minimized and keep track of the position of $(a_k, a_\ell)$ in $L$. Then we successively output all 4-subsets sorted according to their hash value from the priority queue by calling {\sc OutputNext}. For a given collection of 4-subsets with the same hash value we generate all valid 8-subsets, i.e., we find all combinations yielding lexicographically ordered 8-subsets. Note that the ``head" 2-subsets from $H$ are never changed while we only update the ``tail" 2-subsets with new 2-subsets from $L$. During the process we also check whether all elements in the newly created 4-subsets are different. 

In {\sc OutputNext} we simulate a heapsort-like algorithm for 4-subsets. We do not keep explicitly all 4-subsets in $P$ but at most $b \choose 2$ 4-subsets at a time. Once we output a given 4-subset $(a_i,a_j, a_k, a_\ell)$ from $P$, we replace it with $(a_i,a_j, a'_k, a'_\ell)$ where $(a'_k, a'_\ell)$ is the 2-subset in $L$ following $(a_k, a_\ell)$. We also keep track whether we have not already traversed $L$ for each 2-subset in $H$. If this is the case, we remove the 4-subset $(a_i,a_j, a_k, a_\ell)$ from $P$ and the number of recorder entries in $P$ is decreased by 1. At the end we update $P$ and maintain the priority queue invariant.

In the following lemmas we will prove the correctness of the algorithm for general $k$ and will analyze its running time. This will yield our main Theorem~\ref{thm:main}. 

\begin{lemmx}
For $k \ge 4$ with $k \text{ mod } 4 = 0$ {\sc Consistent Subset Sampling} outputs the $k/2$-subsets from a given $b$-set in sorted order according to their hash value in expected time $O(b^{k/2} \log \log b)$ and space $O(b^{k/4})$. 
\end{lemmx}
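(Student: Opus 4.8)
The goal is to show that {\sc ConsistentSubsetSampling} emits the $b \choose k/2$ many $k/2$-subsets of $T$ in nondecreasing hash-value order, using time $O(b^{k/2}\log\log b)$ and space $O(b^{k/4})$. The correctness and the complexity are really two separate arguments, so I would treat them in that order.

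First, correctness of the ordering. The key structural fact is that every $k/2$-subset $I$ admits a \emph{unique} decomposition $I = I_H \cup I_L$ with $I_H$ the lexicographically smaller $k/4$-subset and $I_L$ the larger one (using the total order on subsets assumed in the preliminaries, so that $I_H < I_L$ elementwise). This is the point that lets us avoid double-counting: each $k/2$-subset is generated exactly once. I would argue that the priority queue $P$ maintains the invariant that, at any moment, for each head $I_H$ it contains at most one entry $I_H \cup I_L$, and that this $I_L$ is the \emph{smallest} tail (in hash order along the circular list $L$) among those not yet output for that head. Since $h(I) = (h(I_H) + h(I_L)) \bmod q$ and $h(I_H)$ is fixed for a given head, walking $I_L$ forward through the hash-sorted list $L$ enumerates the values $h(I_H \cup I_L)$ in the cyclic order starting at $h(I_H)$; combined with the {\sc OutputNext} loop that drains all entries of a given hash value $i$ before incrementing $i$, this yields the global sorted output. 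The heap property of $P$ guarantees that the minimum over all heads is extracted first, so the merge of these $b \choose k/4$ sorted cyclic streams is correct. This is essentially the Schroeppel--Shamir priority-queue merge, and I would cite that correspondence. The main obstacle in this part is handling the cyclic wraparound cleanly: because $L$ is circular, for a fixed head the relevant hash values increase, wrap past $q-1$ back to $0$, and must stop exactly when $L$ has been traversed once — this is what $\text{cnt}(I_H) < L.\text{length}$ enforces, and I would verify that the termination test removes the entry precisely after one full pass so that no $k/2$-subset is emitted twice and none is missed.

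Second, the complexity. For space: $H = T^{k/4}$ and the list $L$ each have $b \choose k/4 = O(b^{k/4})$ entries, and $P$ holds at most one $k/2$-subset per head, hence also $O(b^{k/4})$ entries; each entry occupies $O(k)$ words, so the total is $O(b^{k/4})$ as claimed. For time: sorting $L$ by hash value costs $O(b^{k/4}\log b)$, but the dominant cost comes from the main loop. Over the whole execution, {\sc OutputNext} outputs each of the $b \choose k/2 = O(b^{k/2})$ many $k/2$-subsets exactly once, and each output triggers $O(1)$ priority-queue operations (one extract plus one decrease-key/reinsert, or a removal). A comparison-based heap of size $O(b^{k/4})$ gives $O(\log b)$ per operation, for a total of $O(b^{k/2}\log b)$ — a $\log b$ factor worse than the claim. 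To reach $\log\log b$ I would invoke an integer priority queue on the key space $[q]$: since hash values are integers in $[q]$ with $1/p = O(b^k)$, the keys fit in $O(k)$ words, and a van~Emde~Boas-style or Han–Thorup integer priority queue supports the operations in $O(\log\log b)$ time per operation (or amortized), giving total time $O(b^{k/2}\log\log b)$, which dominates the $O(b^{k/4}\log b)$ sorting cost. The delicate point here is to confirm that the universe for the integer PQ can be taken to be roughly $b^{k/4}$-sized (so that $\log\log$ of the universe is $O(\log\log b)$) rather than $q$-sized, since $\log\log q$ could be $\Theta(\log\log b^k) = O(\log\log b)$ anyway up to constants — I would check that $\log\log q = O(\log\log b)$ under the hypothesis $1/p = O(b^k)$, which holds because $\log\log b^k = \log(k\log b) = O(\log\log b)$ for fixed $k$.

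Putting the two parts together gives the lemma, and this is the workhorse behind Theorem~\ref{thm:main}: the $k/2$-subsets emerge in hash-sorted order, so equal-hash classes arrive contiguously, and the final per-class generation of valid $k$-subsets (charged separately as the $O(pb^k)$ term in the main theorem) can proceed on the fly. I expect the cyclic-wraparound termination argument and the justification of the integer-priority-queue time bound to be where the real care is needed; the space bound and the per-output $O(1)$ operation count are routine once the invariant is stated precisely.
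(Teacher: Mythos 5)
Your proposal follows essentially the same route as the paper's own proof: correctness via the invariant that the priority queue always holds, for each head $I_H$, exactly one entry formed with the smallest not-yet-output tail from the hash-sorted circular list (so extraction order is a Schroeppel--Shamir-style merge of per-head monotone streams, with the wraparound handled by starting each head at its minimizing position), and the time bound via an integer priority queue over the hash-value universe of size $b^{O(1)}$, giving $O(\log\log b)$ per operation over $O(b^{k/2})$ outputs. The only cosmetic differences are that the paper implements the queue specifically as a $y$-fast trie, which yields the $O(b^{k/4})$ space bound directly (a plain van Emde Boas structure over $[q]$ would need $O(q)$ space, which your write-up does not rule out), and that the paper's algorithm does not restrict heads to be lexicographically smaller than tails, so each $k/2$-subset is emitted a constant ${k/2 \choose k/4}$ number of times rather than exactly once --- neither point affects the claimed bounds.
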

\begin{proof}
Let $T$ be the given $b$-subset and $\mathcal{S} = I_1,\ldots,I_{b \choose k/2}$ be the $k/2$-subsets occurring in $T$ sorted according to hash value.
For correctness we first show that the following invariant holds: After the $j$ smallest $k/2$-subsets have been output from $P$, $P$ contains the $(j+1)$th smallest $k/2$-subset in $S$, ties resolved arbitrarily. For $j=1$ the statement holds by construction. Assume now that it holds for some $j \geq 1$ and we output the $j$th smallest $k/2$-subset $I = I_H \cup I_L$ for $k/4$-subsets $I_H$ and $I_L$. We then replace it by $I' = I_H \cup I'_L$ where $I'_L$ is the $k/4$-subset in $L$ following $I_L$, or, if $L$ has been already traversed, remove $I$ from $P$. If $P$ contains the $(j+1)$th smallest $k/2$-subset, then the invariant holds. Otherwise, we show that it must be that the $(j+1)$th smallest $k/2$-subset is $I'$. Since $L$ is sorted and we traverse it in increasing order, the $k/2$-subsets $I_H \cup I_L$ with a fixed head $I_H$ that remain to be considered have all a bigger hash value than $I$. The same reasoning applies to all other $k/2$-subsets in $P$, and since no of them is the $(j+1)$-th smallest $k/2$-subset, the only possibility is that indeed $I'$ is the $(j+1)$-th smallest $k/2$-subset.

In $L$ we need to explicitly store $O(b^{k/4})$ subsets. Clearly, we can assume that we access the elements in $L$ in constant time.
The time and space complexity depend on how we implement the priority queue $P$. We observe that for a hash function range in $b^{O(1)}$ the keys on which we compare the 2-subsets are from a universe of size $b^{O(1)}$.  Thus, we can implement $P$ as a $y$-fast trie~\cite{y_fast_trie} in $O(b^{k/4})$ space supporting updates in $O(\log \log b)$ time. This yields the claimed bounds. 

\end{proof}

Note however, that the number of $k/2$-subsets with the same hash value might be $\omega(b^{k/4})$. We next guarantee that the worst case space usage is $O(b^{k/4})$. 

\begin{lemmx} \label{lem:gener}
For a given $r \in [q]$, $k \text{ mod } 4 = 0$, and sampling probability $p \in (0,1]$,  we generate all $k$-subsets from a set of $k/2$-subsets with a hash value $r$ that satisfy the sampling condition  in expected time $O(p^2 b^k)$ and space $O(b^{k/4})$. 
\end{lemmx}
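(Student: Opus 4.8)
I would split the claim into the time bound and the space bound and treat them separately, since the two require quite different arguments.

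For the \emph{time} bound I would first pin down exactly what is produced. A sorted $k$-subset $S=(a_1,\dots,a_k)\subseteq T$ satisfies the sampling condition for the value $r$ precisely when its bottom half $(a_1,\dots,a_{k/2})$ and its top half $(a_{k/2+1},\dots,a_k)$ both hash to $r$; since the split of a sorted $k$-subset into a bottom and a top half is unique, the $k$-subsets generated for a fixed $r$ are in bijection with the ordered pairs $(I_1,I_2)$ of $k/2$-subsets satisfying $I_1<I_2$ and $h(I_1)=h(I_2)=r$ (the relation $I_1<I_2$ already forces disjointness, so all $k$ elements are distinct). I would then bound the expected number of such pairs. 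Fix a $k$-subset $S$; its two halves are disjoint $k/2$-subsets, hence involve $k\le 2k$ distinct elements that are hashed independently. By Lemma~\ref{hash_lemma} each half hashes to $r$ with probability $1/q=p$, and by $2k$-wise independence the two events are independent, so $S$ is output with probability $p^2$. Summing over the $\binom{b}{k}=O(b^k)$ $k$-subsets of $T$ gives expected output size $\binom{b}{k}\,p^2=O(p^2 b^k)$. As each generated $k$-subset is written in $O(k)=O(1)$ time, an output-sensitive enumeration of the valid pairs runs in expected time $O(p^2 b^k)$; the overhead of merely scanning the incoming $k/2$-subsets (expected $O(p\,b^{k/2})$ for a single $r$, and $O(b^{k/2})$ summed over all $r$, since $\mathbb{E}|T^{k/2}_r|=\binom{b}{k/2}p$) is charged to the sorted enumeration of the previous lemma and absorbed by the $b^{\lceil k/2\rceil}\log\log b$ term of Theorem~\ref{thm:main}.

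For the \emph{space} bound the difficulty is that the set $T^{k/2}_r$ of $k/2$-subsets hashing to $r$ may contain $\omega(b^{k/4})$ members, so it cannot be buffered within the $O(b^{k/4})$ budget. I would avoid materializing it by reusing the algorithm's decomposition: every $k/2$-subset with hash value $r$ is a union $I_H\cup I_L$ of two $k/4$-subsets with $h(I_H)+h(I_L)\equiv r\pmod q$, and there are only $\binom{b}{k/4}=O(b^{k/4})$ distinct $k/4$-subsets. Storing just these $k/4$-subsets (together with the priority queue $P$, also of size $O(b^{k/4})$) lets me regenerate the stream $T^{k/2}_r$ on demand rather than hold it. A valid output is then an ordered four-tuple $A<B<C<D$ of $k/4$-subsets with $h(A)+h(B)\equiv h(C)+h(D)\equiv r\pmod q$, and I would enumerate these tuples by a nested sweep: regenerate the $k/2$-subsets with hash $r$ in sorted order and, respecting the constraint $I_1<I_2$, pair each with the admissible later ones. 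At every instant only the $O(b^{k/4})$ stored $k/4$-subsets and the priority-queue state are resident, giving space $O(b^{k/4})$.

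I expect the \emph{space} bound to be the real obstacle. The time bound is essentially a second-moment computation plus output-sensitivity, and is routine once the bijection with pairs of equal-hash halves is in place. The delicate point is reconciling the possibly large number $N_r=|T^{k/2}_r|$ with the $O(b^{k/4})$ ceiling: one must enumerate all valid pairs by regenerating $k/2$-subsets from the stored $k/4$-subsets while (i) honouring the ordering $I_1<I_2$ so that each $k$-subset is emitted exactly once, and (ii) not letting the regeneration inflate the running time beyond the $O(p^2 b^k)$ target. I would therefore spend most of the effort verifying that this regeneration scheme keeps both resources within budget simultaneously.
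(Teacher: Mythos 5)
Your proposal matches the paper's proof in both of its essential ingredients: the expected-output bound $O(p^2b^k)$ (which the paper obtains by squaring the expected class size $pb^{k/2}$, and which you derive more rigorously by summing $p^2$ over all $\binom{b}{k}$ candidate $k$-subsets, using the independence of the two halves' hash values), and the $O(b^{k/4})$ space bound via the implicit representation of the hash-$r$ class through $k/4$-subset heads and their compatible tails in $L$. The one mechanical detail the paper makes explicit and you leave implicit is that, since $L$ is sorted by hash value, the tails compatible with a fixed head form a contiguous subsequence of $L$, so storing two indices per head suffices to make your ``regeneration on demand'' re-enumerable within budget; with that observation your plan is exactly the paper's argument.
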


\begin{proof}

We use the following implicit representation of $k/2$-subsets with the same hash value. For a given $k/4$-subset $I_P$ the $k/4$-subsets $I_L$ in $L$ occurring in $k/2$-subsets $I_P \cup I_L$ with the same hash value are contained in a subsequence of $L$.  Therefore, instead of explicitly storing all $k/2$-subsets, for each $k/4$-subset $I_P$ we store two indices $i$ and $j$ indicating that $h(I_P \cup L[k]) = r$ for $i \le k \le j$.  Clearly, this guarantees a space usage of $O(b^{k/4})$. 

We expect $p b^{k/2}$ $k/2$-subsets to have hash value $r$, thus the number of $k$-subsets that will satisfy the sampling condition is $O(p^2b^k)$.
\end{proof}

The above lemmas prove Theorem~\ref{thm:main} for the case $k \text{ mod } 4 = 0$. One generalizes to arbitrary $k\ge 4$ in the following way: 

For even $k$ with $k \mbox{ mod } 4 =2$, meaning that $k/2$ is odd, it is easy to see that we need a circular list with all $\lfloor k/4 \rfloor$-subsets but the priority queue will contain $b \choose \lceil k/4 \rceil$ pairs of $k/2$-subsets (which are concatenations of $\lceil k/4 \rceil$-subsets and $\lfloor k/4 \rfloor$-subsets). For odd $k$ we want to sample all $k$-subsets for which the sum of the hash values of the first $\lfloor k/2 \rfloor$ elements equals the sum of the hash values of the last $\lceil k/2 \rceil$ elements. We can run two copies of {\sc OutputNext} in parallel, one will output the $\lceil k/2 \rceil$-subsets with hash value $r$ and the other one the $\lfloor k/2 \rfloor$-subsets with hash value $r$ for all $r \in [q]$. Then we can generate all $k$-subsets satisfying the sampling condition as outlined in Lemma~\ref{lem:gener} with the only difference that we will combine $\lceil k/2 \rceil$-subsets with $\lfloor k/2 \rfloor$-subsets output by each copy of {\sc OutputNext}. Clearly, the space complexity is bounded by $O(b^{\lceil k/4 \rceil})$ and the expected running time is $O(b^{\lceil k/2 \rceil} + pb^k)$.
This completes the proof of Theorem~\ref{thm:main}.

\paragraph{\bf A time-space trade-off.}

A better space complexity can be achieved by increasing the running time. The following theorem generalizes our result.

\begin{thmx} \label{thm:tradeoff}
For any $k\geq 2$ and $\ell \le k/2$ we can compute a consistent, pairwise independent sample of $k$-subsets from a given $b$-set in expected time \\$O(b^{\lceil k/2 + \ell \rceil} \log \log b) + pb^k)$ and space $O(b^{\lceil (k-2\ell)/4  \rceil} + b)$ for a given sampling probability $p$, such that $1/p = O(b^k)$ and $p$ can be described in one word. 
\end{thmx}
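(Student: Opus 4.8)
The plan is to generalize the construction behind Theorem~\ref{thm:main}, which corresponds to the special case $\ell = 0$. Recall that in Theorem~\ref{thm:main} we split a $k/2$-subset into a ``head'' $k/4$-subset drawn explicitly from $H$ and a ``tail'' $k/4$-subset drawn from the sorted circular list $L$, keeping only $O(b^{k/4})$ subsets resident at any time while iterating through collisions with {\sc OutputNext}. The key observation is that the balance between the time spent enumerating and the space spent storing is governed by how deeply we recurse the ``store explicit heads, stream tails'' trick. The parameter $\ell$ controls this: instead of insisting that the head and tail each be $k/4$-subsets, we fix a larger explicit head of size $\ell$ on each side and stream tails of size $(k-2\ell)/4$ (roughly) drawn from the circular list. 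Storing fewer, but larger, fully materialized head-subsets is what trades space for time.

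First I would treat the clean case $k \bmod 4 = 0$ and assume $(k-2\ell)/4$ is an integer, mirroring the simplifying assumption in Figure~\ref{fig:css}. I would form the circular list $L$ over all $(k-2\ell)/4$-subsets (so $|L| = O(b^{(k-2\ell)/4})$) sorted by hash value, and for each pair of explicit head subsets of total size $2\ell$ I would seed the priority queue with the matching tail that minimizes the hash value, exactly as in {\sc OutputNext}. The correctness argument is unchanged: the same invariant (after the $j$ smallest $k/2$-subsets are emitted, $P$ holds the $(j{+}1)$th smallest) holds verbatim, because the streaming-over-$L$ logic does not depend on how the head was assembled — it only uses that $L$ is sorted and traversed in increasing hash order. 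The priority queue still lives over a universe of size $b^{O(1)}$, so a $y$-fast trie~\cite{y_fast_trie} gives $O(\log\log b)$ updates; the number of emitted $k/2$-subsets is $O(b^{k/2})$, and each incurs $O(\log\log b)$ per advance, but now we perform $O(b^{\ell})$ extra enumeration per collision bucket to cover the explicit heads, which is where the $b^{\lceil k/2 + \ell\rceil}$ term arises. The space is dominated by $\max(|L|, |P|) = O(b^{\lceil (k-2\ell)/4\rceil})$, plus the $O(b)$ term for sorting each $b$-set via Han--Thorup~\cite{han_thorup}. The $pb^k$ term for actually generating the sampled $k$-subsets from colliding $k/2$-subsets is inherited unchanged from Lemma~\ref{lem:gener}.

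The pairwise independence and uniformity of the resulting sample require no new work: the sampling condition is still $h(I_l) = h(I_r)$ for a $2k$-wise independent $h$, and Lemmas~\ref{hash_lemma}, \ref{sampling_lemma}, and \ref{independence_lemma} apply exactly because they depend only on the hash condition, not on the algorithmic machinery used to find collisions. So once the enumeration is shown correct, the statistical guarantees are immediate. I would finally handle the remaining parities — $k \bmod 4 = 2$, odd $k$, and the rounding in $\lceil k/2 + \ell\rceil$ and $\lceil (k-2\ell)/4\rceil$ — by the same parity-splitting device used at the end of the proof of Theorem~\ref{thm:main}: run two copies of {\sc OutputNext} producing $\lceil k/2\rceil$- and $\lfloor k/2\rfloor$-subsets and combine them, adjusting floors and ceilings accordingly.

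The main obstacle I anticipate is verifying that the time bound is genuinely $O(b^{\lceil k/2 + \ell\rceil}\log\log b)$ rather than something larger once the explicit heads are enlarged. Specifically, I must confirm that enumerating the $O(b^\ell)$ explicit head-configurations and attaching each to the streamed tails does not over-count or re-examine the same $k/2$-subset across buckets, and that the priority queue size stays within $O(b^{\lceil (k-2\ell)/4\rceil})$ even when a single hash bucket is heavily populated — the same worry that motivated the implicit index-pair representation in Lemma~\ref{lem:gener}. I would address this by re-deriving the bucket-size bound: the expected number of $k/2$-subsets per hash value is $O(b^{k/2}/q)$, and the implicit $(i,j)$-interval representation over $L$ bounds the resident storage by the number of distinct heads, which is $O(b^{\lceil (k-2\ell)/4\rceil})$ in the worst case. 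Getting these two accountings — total enumeration cost and peak resident space — to line up cleanly against the stated bounds is the crux.
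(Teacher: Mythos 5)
Your high-level idea---fix some elements explicitly and run the collision-finding machinery of Theorem~\ref{thm:main} on the rest---is the same one the paper uses, and your bounds are the right targets, but the construction as you describe it fails on exactly the point that carries the trade-off. You place the enumerated $2\ell$-element parts \emph{inside} the stored data structure (``for each pair of explicit head subsets of total size $2\ell$ I would seed the priority queue''), so $P$ holds one entry per such head pair and $|P|=\Theta(b^{2\ell})$, which exceeds the claimed $O(b^{\lceil (k-2\ell)/4\rceil}+b)$ space bound for all but tiny $\ell$ (already for $\ell$ above roughly $k/10$). The paper's proof instead iterates over the $2\ell$-subsets in a sequential outer loop, explicitly \emph{without storing them or their hash values}; for each outer choice it re-runs the inner priority-queue machinery on the remaining $k-2\ell$ positions, so only that inner structure, of size $O(b^{\lceil(k-2\ell)/4\rceil})$, ever resides in memory, and the factor $b^{2\ell}$ appears purely as time, giving $b^{2\ell}\cdot b^{\lceil(k-2\ell)/2\rceil}\log\log b = b^{\lceil k/2+\ell\rceil}\log\log b$. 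Relatedly, your dimensional accounting does not compose: a head of size $\ell$ (or $2\ell$) joined to a tail of size $(k-2\ell)/4$ is not a $k/2$-subset, so collisions among your queue entries do not correspond to the sampling condition. The correct split of each half of the $k$-subset is three-way: $\ell$ fixed elements, a $(k-2\ell)/4$-subset serving as priority-queue head, and a $(k-2\ell)/4$-subset streamed from $L$; your write-up never mentions the middle piece, and your closing claim that ``the number of distinct heads is $O(b^{\lceil (k-2\ell)/4\rceil})$'' is only true for that middle piece, not for the $2\ell$-subsets you actually propose to store.

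The second gap is the missing hash offset. Once $\ell$ elements are fixed among the first $\lfloor k/2\rfloor$ positions and $\ell$ among the last $\lceil k/2\rceil$, the condition on the remaining elements is no longer ``equal hash values'' but $h(A) = h(B) - h^\ell \bmod q$, where $h^\ell=(h(a_1)+\dots+h(a_\ell)-h(a_{\lfloor k/2\rfloor+1})-\dots-h(a_{\lfloor k/2\rfloor+\ell}))\bmod q$ is the signed contribution of the fixed elements. The paper handles this by running two priority queues and subtracting $h^\ell$ from every key in the second before matching minima. Your assertion that the correctness invariant of Theorem~\ref{thm:main} holds ``verbatim'' glosses over this: without the offset (or without folding the fixed elements' hash values, with the correct signs, into the queue keys) the collisions you detect are unrelated to the sampling condition, so the output is not a consistent sample. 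Your statistical claim is fine---Lemmas~\ref{hash_lemma}, \ref{sampling_lemma} and \ref{independence_lemma} indeed depend only on the hash condition---but it only becomes relevant after the enumeration is repaired in these two respects, at which point the construction coincides with the paper's.
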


\begin{proof} We need space $O(b)$ to store the $b$-set. Assume that we iterate over $2\ell$-subsets $(a_1, \dots, a_{2\ell})$, without storing them and their hash values. We assume that we have fixed $\ell$  elements among the first $\lfloor k/2 \rfloor$  elements, and $\ell$  elements among the last $\lceil k/2 \rceil$ ones. We compute the value $h^\ell = (h(a_1) + \dots + h(a_\ell) - h(a_{\lfloor k/2 \rfloor + 1}) -\dots - h(a_{\lfloor k/2 \rfloor + \ell}) ) \mbox{ mod } q$. We now want to determine all $(k-2\ell)$-subsets for which the sum of the hash values of the first $\lfloor k/2 \rfloor-\ell$  elements equals the sum of the hash values of the last $\lceil k/2 \rceil - \ell$  elements minus the value $h^\ell$. Essentially, we can sort $(k-2\ell)$-subsets according to their hash value in the same way as before and the only difference is that we subtract $h^\ell$ from the hash value of the last  $\lceil k/2 \rceil - \ell$  elements. Thus, we can use two priority queues, where in the second one we have subtracted $h^\ell$ from the hash value of each $(\lceil k/2 \rceil - \ell)$-subset, output the minima and look up for collisions. Disregarding the space for storing the $b$-set, the outlined modification requires time $O(b^{\lceil k/2 + \ell \rceil} \log \log b)$ and space $O(b^{\lceil (k-2\ell)/4  \rceil})$ to process a given $b$-set. 
\end{proof}

\paragraph{\bf Discussion.}
Let us consider the scalability of our approach to larger values of~$b$, assuming that the time is not dominated by iterating through the sample. If we are given a processor that is $x$ times more powerful, this will allow us to increase the value of $b$ by a factor $x^{1/\lceil k/2 \rceil}$. This will work because the space usage of our approach will only rise by a factor $\sqrt{x}$, and, as already discussed, we expect a factor $x^{0.8}$ more space to be available. An algorithm using space $b^{\lceil k/2 \rceil}$ would likely be space-bounded rather than time-bounded, and thus only be able to increase $b$ by a factor of $x^{0.8/\lceil k/2 \rceil}$. At the other end of the spectrum an algorithm using time $x^k$ and constant space would only be able to increase $b$ by a factor $x^{1/k}$.

\section{Applications of Consistent Subset Sampling}

\subsection{Estimating the number of frequent itemsets} \label{sec:freq_est}

Our original motivation for devising an efficient consistent sampling method was to improve algorithms for the fundamental task of mining frequent itemsets.

We recall again, that in this problem one is given a set of $m$ transactions $T_1, T_2, \dots, T_m$, which are subsets of the ground set of items $\mathcal{I}$.  
A fundamental question is finding the set of frequent $k$-itemsets, that is, subsets of $k$ items occurring in at least $\gamma m$ transactions for a user specified threshold $0<\gamma<1$, see \cite{dmbook} for an overview. The problem is \#P-hard \cite{gunetal} and even hard to approximate \cite{zuckerman}.

Classic algorithms like Apriori~\cite{apriori} and FP-Growth~\cite{FPgrowth} address the problem
and are known to efficiently handle even large input data sets that occur in
practice. However, as shown in~\cite{viara}, their complexity is determined by the number
of frequent itemsets and a lower support threshold may cause exponential
running time. In such cases, one needs to adjust the support threshold and run
the algorithm again with the hope that this time it will run in reasonable time
and will still produce a fair amount of association rules.
Obviously, the above strategy can be very expensive and therefore the basic
frequent set mining algorithms have been further refined to better handle worst
case scenarios~\cite{brinetal,parketal,savasereetal}. However, the number of frequent itemsets is a natural lower bound on the running time of any algorithm computing them.

A recent line of research has started to attack the problem from a different perspective. As Geerts et al.~\cite{geertsetal} artfully describe the goal: {\em ``Indeed, the problem is precisely to predict a combinatorial explosion without suffering from it."} More precisely, randomized algorithms have been designed that accurately predict the number of frequent itemsets without explicitly generating them~\cite{bolgross,sketch_estimate}.

The first randomized algorithm for frequent itemsets mining
appears to be the one by Toivonen~\cite{Toivonen96}. It is based on the natural idea that in a random
sample of the transactions in the database we expect frequent itemsets to be
still frequent. The sample is then used to derive the frequent itemsets in two
passes over the data. However, the running time still depends on the number of
frequent itemsets and thus this approach can be computationally prohibitive.
Also, in a recent work Jin et al.~\cite{sketch_estimate} analyzed the behavior of the above sampling estimator and showed that although it is {\em asymptotically} unbiased, in practice, for samples of reasonable size, it tends to overestimate the number of frequent itemsets. The first randomized algorithm for frequent itemset mining that is not based on sampling of transactions is due to Boley and Grosskreutz~\cite{bolgross}. Their method builds upon a Monte Carlo approach and then uses {\em almost uniform sampling}~\cite{mcmc}. Unfortunately, but not surprisingly, the worst case running time is exponential. Boley and Grosskreutz propose a heuristic procedure that needs a polynomial number of steps and experimentally show that for dense datasets it efficiently estimates the number of frequent itemsets.  However, no theoretical analysis of the approximation guarantee is presented.

A recent work by Amossen et al.~\cite{AmossenCP10} studies the special case of efficient estimation of the number of distinct pairs occurring in a transactional database. (In the frequent itemset mining context this corresponds to estimating the number of 2-itemsets with support threshold at least 1.) 
Building upon a work for estimating the number of distinct elements in a data stream by Bar-Yossef et al.~\cite{baryossef02counting}, their algorithm 
obtains a $(1\pm \varepsilon)$-approximation of the number of distinct pairs with high probability by storing hash values for $O(1/\varepsilon^2)$ pairs. The main contribution of the proposed algorithm is that each transaction can be processed in expected linear time in its length as opposed to the na\"ive application of~\cite{baryossef02counting} which would require quadratic processing time per transaction. The Amossen et al. algorithm can be extended to estimating the number of frequent pairs for a user defined frequency threshold but it is not clear how to adjust it to the problem of approximate frequent $k$-itemset counting.   


Our approach based on consistent subset sampling applies in a natural way to this problem. We simply associate transactions with $b$-sets and $k$-itemsets with $k$-subsets. It is drastically different from previous sampling approaches for the problem~\cite{bolgross,sketch_estimate}, and gives the first nontrivial algorithm for the problem with rigorously understood complexity and error guarantee. Moreover, it can work in a streaming fashion requiring only one pass over the input, thus it has a wider range of applications. 

Note that since sampling is a powerful technique our algorithm can be applied to estimating the number of {\em interesting} $k$-itemsets for any efficiently computable criterion for interestingness. For example, one can estimate the number of $k$-itemsets with items satisfying certain user specified constraints or $k$-itemsets with frequency in a given range. The following results apply to the generalized version of the problem. Also note that the time-space trade-off from Theorem~\ref{thm:tradeoff} applies to the next theorems but for a better readability we will use the time and space complexity from Theorem~\ref{thm:main}. 

In the following theorems we will say that an algorithm returns an {\em $(\varepsilon, \delta)$-approximation} of some quantity $Q$, if it returns a value $\tilde{Q}$, such that $(1- \varepsilon)Q \leq \tilde{Q} \leq (1+\varepsilon)Q$ with probability at least $1 - \delta$ for any $0 < \varepsilon, \delta < 1$.

\begin{figure}  
\renewcommand{\thealgorithm}{}
{\sc SingleRunEstimator}
\begin{algorithmic}[0]
\medskip
\REQUIRE \textbf{Input:} stream of $m$ transactions $\mathcal{T}$ over $n$ items, $2k$-wise independent hash function $h: \mathcal{I} \rightarrow [q]$, double $\alpha, \varepsilon \in (0,1]$
\medskip
\STATE Set $s= \frac{8}{\alpha \varepsilon^2}$
\STATE Let $H$ be a hashtable for $s$ $\langle k$-itemsets, integer$\rangle$-entries
\FOR{each transaction $T \in \mathcal{T}$}
\FOR{each $k$-itemset $I$ output by {\sc ConsistentSubsetSample$(T, h)$}} 
\STATE Update$(H, I)$.
\IF{the number of sampled $k$-itemsets exceeds $32s$}
\RETURN (-1,-1).
\ENDIF
\ENDFOR
\ENDFOR
\IF{the number of sampled $k$-itemsets is less than $s$}
\RETURN (-1,-1).
\ENDIF
\STATE Let $\tilde{\alpha}'$ be the ratio of frequent $k$-itemsets in $H$ and $s'$ the number of sampled $k$-itemsets.
\RETURN{ $(\tilde{\alpha}', \tilde{z})$, $\tilde{z} = qs'$, as estimates of $\alpha'$ and $z$}  
\end{algorithmic}

\bigskip
\renewcommand{\thealgorithm}{}
{\sc GuessingEstimator}
\begin{algorithmic}[0]
\REQUIRE \textbf{Input:} stream of $m$ transactions $\mathcal{T}$ over $n$ items, double $\alpha, \varepsilon \in (0,1]$
\medskip
\STATE Set $s= \frac{8}{\alpha \varepsilon^2}$
\FOR{$i = 0$ to $\log m$}
\STATE Run in parallel
\STATE $z_i = 2^i {b \choose k}$
\STATE choose a $2k$-wise independent $h: \mathcal{I} \rightarrow [q]$ for $q = \text{max}(\frac{z_i}{2s}, 1)$
\STATE $(\tilde{\alpha_i}, \tilde{z_i})=$ {\sc SingleRunEstimator($\mathcal{T}, h, \alpha, \varepsilon$)}.
\ENDFOR
\STATE Let $\tilde{\alpha}_j$ and $\tilde{z}_j$ to be two estimates for some $j \in [\log m]$ different from -1, if existent.
\RETURN{$(\tilde{\alpha}, \tilde{z})$.}
\end{algorithmic}

\bigskip
\renewcommand{\thealgorithm}{}
{\sc FrequentItemsetEstimator}
\begin{algorithmic}[0]
\REQUIRE \textbf{Input:} stream of $m$ transactions $\mathcal{T}$ over $n$ items, double $\alpha, \varepsilon, \delta \in (0,1]$
\medskip
\STATE Run $\log 1/\delta$ copies of {\sc GuessingEstimator}$(\mathcal{T}, \alpha, \varepsilon)$ in parallel.
\STATE Let $\tilde{\alpha}_{m}$ and $\tilde{z}_{m}$ be the medians of the $\log 1/\delta$ estimates $\tilde{\alpha}$ and $\tilde{z}$  
\RETURN{$\tilde{f}=\tilde{\alpha}_m \cdot \tilde{z}_m$}
\end{algorithmic}

\caption{Estimating the number of frequent $k$-itemsets.}
\end{figure}

\begin{thmx} \label{thm:frequent_itemsets_estim}
Let $\mathcal{T}$ be a stream of $m$ transactions of size at most $b$ over a set of $n$ items and $f$ and $z$ be the number of frequent and different $k$-itemsets, $k \ge 2$, in $\mathcal{T}$, respectively. For any $\alpha, \varepsilon, \delta >0$ there exists a randomized algorithm running in expected time $O(mb^{\lceil k/2 \rceil}\log \log b + \frac{\log m \log \delta^{-1}}{\alpha \varepsilon^2})$ and space $O(b^{\lceil k/4 \rceil} + \frac{\log m \log \delta^{-1}}{\alpha\varepsilon^2})$ in one pass over $\mathcal{T}$ returning a value $\tilde{f}$ such that
\begin{itemize}
\item if $f/z \ge \alpha$, then $\tilde{f}$ is an $(\varepsilon, \delta)$-approximation of $f$.
\item otherwise, if $f/z < \alpha$,  then $\tilde{f} \le (1 + \varepsilon)f$ with probability at least $1 - \delta$.
\end{itemize}  
\end{thmx}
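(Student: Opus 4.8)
The plan is to reduce the theorem to a single concentration statement about the random number of \emph{sampled} frequent $k$-itemsets, and then to climb the three levels of the algorithm, turning a constant-probability guarantee for one run into correctness despite the unknown value of $z$, and finally into confidence $1-\delta$. Fix a run with sampling probability $p=1/q$ and let $s'=\sum_{I\in\mathcal{I}^k}X_I$ count the distinct sampled $k$-itemsets and $s'_{\mathrm{freq}}=\sum_{I\text{ frequent}}X_I$ the sampled frequent ones. By Lemma~\ref{sampling_lemma} each $X_I$ has mean $1/q$, so $\E[s']=z/q$ and $\E[s'_{\mathrm{freq}}]=f/q$; consistency of the sample (every occurrence of a sampled itemset is recorded) means that after one pass the true frequency of each sampled itemset is known, so $s'_{\mathrm{freq}}$ is well defined. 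The run returns $(\tilde\alpha',\tilde z)$ with $\tilde z=qs'$ and $\tilde\alpha'=s'_{\mathrm{freq}}/s'$, and the product satisfies $\tilde\alpha'\tilde z=q\,s'_{\mathrm{freq}}$, an \emph{unbiased} estimator of $f$; isolating this identity is what makes the analysis manageable.

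First I would analyse a single \emph{well-tuned} run, meaning one whose rate gives $\E[s']=z/q\in[2s,4s]$. The engine is Lemma~\ref{independence_lemma}: pairwise independence makes the covariances vanish, so $\Var[s']\le\E[s']$ and $\Var[s'_{\mathrm{freq}}]\le\E[s'_{\mathrm{freq}}]$ and Chebyshev applies to both sums. Since $\E[s']\ge 2s=\Omega(1/(\alpha\varepsilon^2))$, with constant probability $s'$ lands in $[s,32s]$ (the run is not discarded) and $\tilde z=qs'\in(1\pm\tfrac{\varepsilon}{3})z$. When $f/z\ge\alpha$ we moreover have $\E[s'_{\mathrm{freq}}]=(f/z)\E[s']\ge\alpha\cdot 2s=\Omega(1/\varepsilon^2)$, so Chebyshev bounds $\Pr[\,s'_{\mathrm{freq}}\notin(1\pm\tfrac{\varepsilon}{3})\E[s'_{\mathrm{freq}}]\,]$ by $O(1/(\varepsilon^2\E[s'_{\mathrm{freq}}]))$, a constant that is forced below any chosen threshold by the constant in $s=8/(\alpha\varepsilon^2)$; this yields $\tilde\alpha'\in(1\pm\tfrac{\varepsilon}{3})\alpha'$ with $\alpha'=f/z$. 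Combining, a well-tuned run returns $\tilde\alpha'\tilde z\in(1\pm\varepsilon)f$ with constant probability.

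The crux is \textsc{GuessingEstimator}, which removes the assumption that $q$ is tuned to the unknown $z$. The guesses $z_i=2^i\binom{b}{k}$ make the means $\mu_i:=\E[s']=z/q_i$ a geometric sequence, and some index $i^\ast$ gives $\mu_{i^\ast}\in[2s,4s]$ (using $z\le m\binom{b}{k}$; the boundary case of small $z$, where the smallest rate already samples essentially everything and $f$ is obtained exactly, is handled directly). The difficulty is that the procedure outputs the estimate of an \emph{arbitrary} accepted guess, so I must show that \emph{every} accepted guess is accurate, while a plain union bound over the $\Theta(\log m)$ guesses at constant per-guess failure is worthless. The resolution, and the main obstacle, is to bound the \emph{total} probability of wrongly accepting a far guess by a convergent geometric series. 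A guess with $\mu_i\le s/2$ is accepted only if $s'\ge s$, of probability at most $\Var[s']/(s-\mu_i)^2=O(\mu_i/s^2)$ by Chebyshev, and since the eligible $\mu_i$ halve along the grid, $\sum\mu_i=O(s)$ and the whole contribution is $O(1/s)$; symmetrically a guess with $\mu_i\ge 64s$ is accepted with probability $O(1/\mu_i)$, again summing to $O(1/s)$. Hence, except with probability $O(1/s)$, the only guesses that can be accepted are the $O(1)$ ones with $\mu_i=\Theta(s)$, each of which is well-tuned and therefore accurate by the previous step; together with the fact that $i^\ast$ is accepted with probability $1-O(1/s)$, one \textsc{GuessingEstimator} returns $\tilde z\in(1\pm\tfrac{\varepsilon}{3})z$ and, when $f/z\ge\alpha$, $\tilde\alpha\in(1\pm\tfrac{\varepsilon}{3})\alpha'$, with constant probability.

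Finally I would boost the confidence and settle the one-sided case and the complexity. Running $\log(1/\delta)$ independent copies of \textsc{GuessingEstimator} and taking the coordinatewise medians $\tilde\alpha_m,\tilde z_m$, a Chernoff bound~(\ref{chernoff}) over the copies puts each median in its target interval except with probability $\delta$, so when $f/z\ge\alpha$ the returned $\tilde f=\tilde\alpha_m\tilde z_m\in(1\pm\varepsilon)\alpha' z=(1\pm\varepsilon)f$, the desired $(\varepsilon,\delta)$-approximation. When $f/z<\alpha$ the numerator may be too lightly sampled to control its lower tail, but the claim asks only for an upper bound, and only the \emph{upper} tails of $s'$ and $s'_{\mathrm{freq}}$ are needed: Chebyshev handles the case $\E[s'_{\mathrm{freq}}]=\Omega(1/\varepsilon^2)$, while for tiny $\E[s'_{\mathrm{freq}}]$ the cruder $\Pr[s'_{\mathrm{freq}}\ge 1]\le f/q$ (Markov) suffices, each giving a per-copy upper-tail probability below a constant; the medians then satisfy $\tilde z_m\le(1+\tfrac{\varepsilon}{3})z$ and $\tilde\alpha_m\le(1+\tfrac{\varepsilon}{3})\alpha'$ except with probability $\delta$, so $\tilde f\le(1+\varepsilon)f$. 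The time and space bounds follow by applying Theorem~\ref{thm:main} once per transaction in each of the $O(\log m\,\log\delta^{-1})$ parallel runs, noting that the early abort at $32s$ sampled itemsets caps the sample-dependent $pb^k$ work and that the hashtables account for the additive $O((\log m\,\log\delta^{-1})/(\alpha\varepsilon^2))$ space.
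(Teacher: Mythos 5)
Your proposal is correct and follows the same three-level architecture as the paper's proof: Chebyshev via pairwise independence (Lemma~\ref{independence_lemma}) for a single run with $\Theta(1/(\alpha\varepsilon^2))$ expected samples, a geometric grid of guesses $z_i = 2^i\binom{b}{k}$ to remove the assumption that $z$ is known, and a Chernoff-boosted median over $\log(1/\delta)$ independent copies. Where you genuinely depart from -- and improve on -- the paper is the treatment of \textsc{GuessingEstimator}'s acceptance rule. The paper analyzes only the well-tuned index $j$ with $z_j \le z < z_{j+1}$ (bounding its failure probability by $3/8$ via Markov, Chebyshev and the accuracy bound) and then silently treats whichever accepted index happens to be returned as if it were that one; it never argues that a badly mis-tuned guess cannot pass the $[s,32s]$ test and deliver an inaccurate estimate. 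You identified exactly this as the crux and closed it: a guess with $\mu_i \le s/2$ is accepted with probability $O(\mu_i/s^2)$, one with $\mu_i \ge 64s$ with probability $O(1/\mu_i)$, and since the $\mu_i$ form a geometric sequence these wrong-acceptance probabilities sum to $O(1/s)$, so with constant probability every accepted guess has $\mu_i = \Theta(s)$ and is accurate. Two further cleanups are worth keeping: the identity $\tilde\alpha'\tilde z = q\,s'_{\mathrm{freq}}$ turns $\tilde f$ into a single unbiased estimator, simplifying the paper's separate parallel estimation of $\alpha'$ and $z$; and your explicit upper-tail/Markov handling of the $f/z<\alpha$ case repairs the paper's one-line ``analogously'' (which, read literally, only yields $\tilde f \le (1+\varepsilon)\alpha z$ rather than the claimed $\tilde f \le (1+\varepsilon)f$).

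Two caveats, neither fatal. First, your union bound over the $O(1)$ surviving guesses with $\mu_i \in (s/2,\,64s)$ does not go through with the paper's literal constant $s = 8/(\alpha\varepsilon^2)$: roughly seven surviving indices, each failing with constant probability, can exceed $1/2$. As you indicate, this is fixed by enlarging the constant in $s$, with no effect on the asymptotic bounds, but the adjustment should be stated rather than left implicit. Second, your time accounting (``Theorem~\ref{thm:main} once per transaction in each of the $O(\log m\,\log\delta^{-1})$ runs'') actually produces $O\bigl(mb^{\lceil k/2\rceil}\log\log b \cdot \log m\,\log\delta^{-1}\bigr)$ for the first term, since distinct runs use distinct hash functions and cannot share the priority-queue work; this exceeds the theorem's stated bound. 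The paper's own proof is equally silent on this multiplicative factor, so the discrepancy lies in the statement rather than in your argument, but you should not present the multiplied bound as ``yielding'' the stated one.
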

\begin{proof}

In the following we will analyze the time and space complexity of {\sc FrequentItemsetEstimator} as well as the quality of the returned estimation.

First note that {\sc ConsistentSubsetSampling} guarantees that all occurrences of a given $k$-itemset $I$ will be sampled, therefore we will compute $I$'s exact frequency in $\mathcal{T}$.

We show how to estimate the values of $\alpha' = f/z$ and $z$ in parallel. At the end we return as an estimate $\tilde{\alpha}'\tilde{z}$ from the computed estimates $\tilde{\alpha}'$ and $\tilde{z}$. 
Denote by $F_k$ the frequent $k$-itemsets and by $Z_k$ all $k$-itemsets occurring in $\mathcal{T}$. Let us denote by $f$ and $z$ the cardinalities of $F_k$ and $Z_k$, respectively.

We first show how to obtain an $(\varepsilon', \delta)$-approximation of $f$ for $\varepsilon'$ that will be defined later.  Assume for now that $z$ is known. (We later show how to remove the assumption.) Consider the case when the fraction of frequent $k$-itemsets is $\alpha' \ge \alpha$. Let us estimate how many $k$-itemsets we need to sample in order to achieve an $(1\pm \varepsilon)$-approximation of $F_k$. Assume we have sampled a set $S$ containing $s$ $k$-itemsets. For each $k$-itemset $I \in S$ we introduce an indicator random $X_I$ denoting whether $I$ is frequent. Let $X = \sum_{I \in S} X_I$. Clearly, we have $\mathbb{E}[X_I] = \alpha'$ and $\mathbb{E}[X] = \alpha' s$. As shown in Lemma~\ref{independence_lemma}, the sampling of two $k$-itemsets is pairwise independent and since the $X_I$ are $\{0,1\}$-valued, for the variance of $X$ it holds $\mathbb{V}[X] \le \mathbb{E}[X]$. By applying Chebyshev's inequality we obtain
$$\Pr[|X-\mathbb{E}[X]| \ge \varepsilon' \alpha' s] \le \frac{\mathbb{V}[X]}{\alpha'^2 s^2} \le  \frac{\mathbb{E}[X]}{\varepsilon'^2\alpha'^2 s^2} = \frac{1}{\varepsilon'^2\alpha' s} \le  \frac{1}{\varepsilon'^2\alpha s}.$$

Thus, we need $s \ge \frac{8}{\alpha \varepsilon'^2}$ to bound the probability that the value returned by {\sc SingleRunEstimator} $\tilde{\alpha}'$ is not an $(1\pm \varepsilon')$-approximation of $\alpha'$ to $1/8$. Analogously, we obtain that if $\alpha' < \alpha$, with probability at least $7/8$ we will return an estimate that is bounded by $(1+\varepsilon')\alpha$. Next we show how to obtain the desired number of samples while guaranteeing the claimed worst case space usage.

Now we show that in at least one run of {\sc GuessingEstimator} we will obtain the required $s \ge \frac{8}{\alpha \varepsilon'^2}$ samples. In the following we assume without loss of generality that $s\ge 32$ and that $\log m$ is integer. In {\sc GuessingEstimator} we run several copies of the algorithm in parallel for different values for $z$. Clearly, ${b \choose k} \le z \le m{b \choose k}$.  Thus we run $\log m$ copies with $z_i = {b \choose k} 2^i$, $i \in \{1,2,\ldots,\log m\}$. For the $i$th copy we choose a sampling probability $p_i = \mbox{min}(\frac{2s}{z_i}, 1)$, i.e., the hash function range in the $i$th run of {\sc SingleRunEstimator} is $q = \mbox{max}(\frac{z_i}{2s}, 1)$.  Without loss of generality we assume that $z_j \le z < z_{j+1}$ for some fixed $1 \le j < \log m$. Consider the $j$th copy of the algorithm. We again introduce an indicator random variable $Y_I$ for each $k$-itemset $I \in \mathcal{T}$ denoting whether it has been sampled and set $Y = \sum_{I \in \mathcal{T}} Y_I$. We have $\mathbb{E}[Y] = \frac{z s}{z_j}$ and thus $2s < \mathbb{E}[Y] \le 4s$. Applying Markov's inequality with $\lambda = 32 s$ we then bound the probability that more than $32s$ $k$-itemsets will be sampled to $1/8$. We apply Chebyshev's inequality with $\lambda = s$ in order to bound the probability that less than $s$ $k$-itemsets will be sampled.  $$\Pr[Y < s] \le \Pr[|Y-\mathbb{E}[Y]| \ge s] \le \frac{\mathbb{V}[Y]}{s^2} \le \frac{\mathbb{E}[Y]}{s^2} \le \frac{4}{s} \le \frac{1}{8}.$$ 

By the union bound we bound the probability that either too many $k$-itemsets have been sampled, or not enough $k$-itemsets have been sampled or an inaccurate estimate of $\alpha'$ is returned to at most $3/8$. 

By running $K$ independent copies of the algorithm in parallel, we expect at least $\frac{5}{8}K$ correct estimates. Thus, we expect that the median will be also an $(1\pm \varepsilon')$-approximation of $\alpha'$. Introducing an indicator random variable for each copy, denoting whether we returned an $(1\pm \varepsilon')$-approximation of $\alpha'$, we can apply Chernoff's inequality from section 2 with $\ell = \log \frac{2}{\delta}$ 
and bound the error probability to $\delta/2$. 

The value $z$ is estimated in parallel in analogous way. For a given sampling probability $p$, we expect $pz$ distinct $k$-itemsets in the sample and one can show concentration around the expected value via Chebyshev's inequality. For the sampling probabilities in {\sc GuessingEstimator} one can show again that with probability at least $5/8$ for at least one run of {\sc GuessingEstimator} we return $\tilde{z}$, an $(1 \pm \varepsilon')$-approximation of $z$. The error probability can be bound to $\delta/2$ by running $\log \frac{2}{\delta}$ copies in parallel, thus by the union bound we obtain error probability of $\delta$ that either $\tilde{\alpha}'$ or $\tilde{z}$ are not $(1\pm \varepsilon')$-approximations. (Note that since we know $z \ge {b \choose k}$ we do not have to consider the case that we do not obtain enough samples.) It is easy to see that choosing $\varepsilon = 2\varepsilon'$, $\tilde{f} = \tilde{\alpha}\cdot \tilde{z}$ is an $(\varepsilon, \delta)$-approximation of $f$.

The sampled $k$-itemsets are stored in a hashtable $H$ of size bounded by $O(s)$, and $H$ is updated in constant time. The time and space complexity follow then immediately from the pseudocode and Theorem~\ref{thm:main}. 
\qed 
\end{proof}

\subsection{Parallelizing frequent itemset mining algorithms}

In a recent work Campagna et al.~\cite{pairse} propose a new method for parallelizing frequent items mining algorithm based on hashing when applied to mining frequently co-occurring item-item pairs in transactional data streams. Let us first recall the two randomized frequent items mining algorithms {\sc Count-Sketch}~\cite{count_sketch} and {\sc CountMin-Sketch}~\cite{count_min}. We are given a stream of items $\mathcal{I}$ and want to estimate the frequency of each item after processing the stream. In both algorithms we maintain a sketch $Q$ of the stream of the items processed so far. $Q$ consists of $q$ buckets. For each incoming item $i \in \mathcal{I}$ we update one of the $q$ buckets in the sketch. The bucket is chosen according to the value $h(i)$ for a pairwise independent hash function $h:\mathcal{I} \rightarrow [q]$. The algorithms differ in the way we update the bucket $i$ is hashed to and achieve an additive approximation for the frequency of each individual item in terms of the 1- and 2-norm of the stream frequency vector, i.e., the vector obtained from the items frequencies. We refer the reader to the original works~\cite{count_sketch,count_min} for a thorough description of the algorithms and the approximation guarantees.

It is clear that {\sc Count-Sketch} and {\sc CountMin-Sketch} can be applied in a trivial way to the estimation of $k$-itemset frequencies in transactional data streams by simply generating all $k$-itemsets occurring in a given transaction and treating them as items in a stream. This is also the only known approach for frequent $k$-itemset mining in transactional streams with rigorously understood complexity and approximation guarantee. Campagna et al.~\cite{pairse} propose a novel approach to parallelizing the above algorithm applied to the estimation of pairs (or 2-itemsets) frequencies in transactional streams. The design of our hash function allows the generalization of the main idea from~\cite{pairse} to $k$-itemset mining for $k > 2$. Assume the sketch $Q$ consists of $q$ buckets and we are given $p$ processors. We want to distribute the processing of $k$-itemsets occurring in a given transaction among the $p$ processors. Assume without loss of generality that $t = q/p$ is integer. We number the processors from 0 to $p-1$ and assign to the $i$th processor the buckets in the sketch numbered between $it$ and $(i+1)t - 1$ for $i \in [p]$. Thus, we want to efficiently find all $k$-itemsets with a hash value in the given range. By a simple extension of the {\sc ConsistentSubsetSampling} algorithm from Section~\ref{sec:algorithm} we are able to efficiently find all $k$-itemsets hashing to a given value $r \in [q]$. Assume for simplicity that $k$ is even. We set $j_+ = 0$ and $j_- =r - j_+$. We work with two priority queues $P_+$ and $P_-$ as described in section~\ref{sec:algorithm}. However, $P_+$ will output $k/2$-itemsets in increasing order according to their hash value and $P_-$ will output $k/2$-itemsets in decreasing order. We output all $k/2$-itemsets with hash value $j_+$ from $P_+$ and all $k/2$-itemsets with hash value $j_-$ from $P_-$ and generate all valid combinations. Clearly, this can be done by the methods from Section~\ref{sec:algorithm}. Also, we can extend the algorithm to output the $k$-itemsets with a hash value in a given range of size $t$. Instead of storing in $P_-$ only $k/2$-itemsets with a given hash value, we store all $k/2$-itemsets with a hash value in the range $[j_-, j_-+t)$ for monotonically increasing $j_-$. Using the approach presented in Lemma~\ref{lem:gener} we can guarantee that the space usage will not exceed $O(b^{k/4})$. The above extends to arbitrary $k>2$ as presented in section~\ref{sec:algorithm}.  This yields the following result:

\begin{thmx} \label{thm:parallel}
There exists a pairwise independent hash function $h: \mathcal{I}^k \rightarrow [q]$ that can be described in $O(k)$ words, such that given a transaction $T$ with $b$ items we can output $T^k_R$, the set of $k$-itemsets in $T$ hashing to a value in a given range $R = [i, i+t)$ for $i \in [q], t \le q$, in expected time $O(|T^k_R| + b^{\lceil k/2 \rceil}\log \log b)$ and space $O(b^{\lceil k/4 \rceil})$. 
\end{thmx}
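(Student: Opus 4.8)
The plan is to instantiate the collision-finding machinery of Section~\ref{sec:algorithm} with a small modification that replaces the fixed ``equal-sum'' sampling condition by a shifted range condition on the second half of each $k$-itemset. First I would fix the hash function: take a $2k$-wise independent $h:\mathcal{I}\to[q]$ and define $h(a_1,\dots,a_k)=\big(\sum_{j=1}^{k}h(a_j)\big)\bmod q$, which is describable in $O(k)$ words. Pairwise independence (and a uniform marginal) on distinct $k$-subsets follows from an argument in the spirit of Lemmas~\ref{hash_lemma}--\ref{independence_lemma}: given distinct $I_1\ne I_2$, pick $e_1\in I_1\setminus I_2$ and $e_2\in I_2\setminus I_1$ (these exist and are distinct since both subsets have size $k$); conditioning on the remaining $\le 2k-2$ element-hashes, $h(I_1)$ and $h(I_2)$ depend only on the independent uniform values $h(e_1)$ and $h(e_2)$, so $\Pr[h(I_1)=r_1\wedge h(I_2)=r_2]=1/q^2$.

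Next I would reduce the range query to the sorted-output primitive. Using the sorted representation of subsets, I split each $k$-itemset into its first $\lfloor k/2\rfloor$ elements $I_L$ and its last $\lceil k/2\rceil$ elements $I_R$; since only sorted subsets are considered this split is unique, so every element of $T^k_R$ is generated exactly once. Because $h(I_L\cup I_R)=h(I_L)+h(I_R)\bmod q$, the target condition $h(I_L\cup I_R)\in[i,i+t)$ is equivalent to $h(I_R)\in[\,i-h(I_L),\,i-h(I_L)+t\,)\bmod q$. I would run two instances of the Schroeppel--Shamir-style priority queue of Section~\ref{sec:algorithm}: $P_+$ emitting the $\lfloor k/2\rfloor$-subsets in increasing hash order, and $P_-$ emitting the $\lceil k/2\rceil$-subsets in the order needed to maintain a sliding window of width $t$ of right-half hash values. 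As the left-half hash value $j_+$ produced by $P_+$ increases, the admissible window $[\,i-j_+,\,i-j_+ + t\,)$ for the right half shifts monotonically, so $P_-$ is advanced in the matching direction and the admissible set is updated incrementally rather than recomputed; each surviving pair is combined into a valid sorted $k$-itemset with $k$ distinct elements.

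The two remaining points are space and time. For space, rather than materializing all right halves in the current window I would use the implicit representation of Lemma~\ref{lem:gener}: the right halves compatible with a fixed head occupy a contiguous subsequence of the sorted list $L$, hence can be recorded by a pair of indices per head, keeping the working set at $O(b^{\lceil k/4\rceil})$, while the priority queues themselves hold only $O(b^{\lceil k/4\rceil})$ entries at a time. For time, the two traversals emit $O(b^{\lceil k/2\rceil})$ half-subsets in total at $O(\log\log b)$ per step (via the $y$-fast trie), yielding the $O(b^{\lceil k/2\rceil}\log\log b)$ term, while the combination step does work proportional to the $k$-itemsets it actually outputs, giving the additive $O(|T^k_R|)$ term. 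Odd $k$ and $k\not\equiv 0\pmod 4$ are handled exactly as in the generalization following Lemma~\ref{lem:gener}.

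I expect the main obstacle to be the careful bookkeeping of the sliding window under arithmetic modulo $q$: one must ensure that as $j_+$ ranges over $[q]$ the wrap-around of $[\,i-j_+,\,i-j_+ + t\,)\bmod q$ is handled so that each qualifying right half enters and leaves the window exactly once, with no $k$-itemset counted twice or skipped, all while never letting the explicitly stored window exceed $O(b^{\lceil k/4\rceil})$. Making the amortized cost of maintaining this window charge only against emitted output (and not against empty windows) is precisely what secures the clean $O(|T^k_R|)$ additive term.
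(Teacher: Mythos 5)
Your proposal follows the paper's construction essentially step for step: the same sum-of-element-hashes function $h(I)=\sum_{a\in I}h(a)\bmod q$, the same pair of priority queues $P_+$ (emitting left halves in increasing hash order) and $P_-$ (emitting right halves in decreasing order, maintaining a sliding window of width $t$), the same use of the index-pair representation from Lemma~\ref{lem:gener} to keep space at $O(b^{\lceil k/4\rceil})$, and the same $y$-fast-trie accounting giving $O(b^{\lceil k/2\rceil}\log\log b)$ plus output-proportional combination time. Your explicit conditioning argument for pairwise independence is, if anything, cleaner and more complete than the paper's appeal to Lemmas~\ref{hash_lemma}--\ref{independence_lemma}, and the modular wrap-around bookkeeping you flag is a genuine detail that the paper's informal discussion also leaves implicit, but the overall route is the same.
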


The above theorem essentially says that in a parallelized setting each processor only needs to read the transaction and then can efficiently decide which $k$-itemsets it is responsible for. Thus, {\sc Count-Sketch} and {\sc CountMin-Sketch} applied to transactional data streams can be considerably sped up when several processors are available or, more practically, on a desktop with a multi-core CPU.

\subsection{Estimating the number of subgraphs for incidence streams}

Let $G= (V,E)$ be a simple directed or undirected graph with $n = |V|$ vertices and $m=|E|$ edges and bounded degree $\Delta$. We say that vertices $u$ and $v$ are {\em neighbors} if there exists an edge $(u, v) \in E$.  For many real-life problems the whole graph does not fit in memory and we can access only given parts of it. A well-known model for handling large graphs is to read the graph as an ``incidence" stream, that is, for each vertex only the vertices adjacent to it are given. A $k$-subset of the neighbors of some vertex is called {\em $k$-adjacencies}, many vertices can have the same $k$-adjacencies.  The ordering of vertex appearances is arbitrary. One is then interested in certain structural properties of the graph. 
Due to a large number of applications of particular interest have been algorithms for approximately counting the number of small subgraphs. The problem of approximately counting triangles and estimating clustering coefficients in streamed graphs has received considerable attention~\cite{becchettietal,buriol_et_al_1,buriol_et_al_2,iran_triangles,tri_soda02}. The approach was extended to counting $(i,j)$-bipartite cliques~\cite{buriol_et_al_2}, counting graph minors of fixed size~\cite{bordino_et_al}. Building upon work by Jowhari and Ghodsi~\cite{iran_triangles}, Manjunath et al.~\cite{cycles_counting} present new algorithms for approximate counting of cycles in the {\em turnstile model} where one is allowed to insert as well as to delete edges. The approach was later extended by Kane et al.~\cite{subgraph_counting} to arbitrary subgraphs of fixed size.

\paragraph{\bf Estimating the number of $k$-cliques.}

The currently fastest algorithm for counting triangles in incidence streams is the wedge sampling approach by Buriol et al.~\cite{buriol_et_al_1}.  The algorithm can be easily generalized to counting $k$-cliques for arbitrary $k \ge 3$. One samples at random a $(k-1)$-star~\footnote{An $\ell$-star is a connected graph on $\ell + 1$ vertices with one internal vertex of degree $\ell$ and $\ell$ leafs of degree 1.} $s$ and then in the next incidence lists we check whether there exists a $(k-1)$-star centered with the same set of vertices as $s$ but centered at  a different vertex. If we find $k-1$ or $k$ such $(k-1)$-stars, then we conclude that the sampled $(k-1)$-star $s$ is completed to a $k$-clique. Thus, if $s$ is indeed part of a $k$-clique, then it needs to be either the first or the second $(k-1)$-star among the $k$ such $(k-1)$-stars that are part of the $k$-clique. Let $K_k$ be the number of $k$-cliques and $S_{k-1}$ the number of $(k-1)$-stars in the graph.  By setting $\gamma = \frac{K_k}{S_{k-1}}$ to be the ratio of $k$-cliques to $(k-1)$-stars in the graph, we thus need $O(\frac{k}{\gamma \varepsilon^2} \log \frac{1}{\delta})$ sampled $(k-1)$-stars in order to obtain an $(\varepsilon, \delta)$-approximation of $K_k$. Since we don't know the exact value of $K_k = O(n^k)$, one needs to run $O(k\log n)$ copies in parallel each guessing the right value (as we did in {\sc GuessingEstimator} in Section~\ref{sec:freq_est}.  The na\"ive algorithm for processing each edge would result in processing time of $O(\frac{k}{\gamma \varepsilon^2} \log \frac{1}{\delta})$ but using the clever techniques from~\cite{buriol_et_al_1} one can obtain an amortized processing time of $O((1 + \frac{m k}{n \gamma \varepsilon^2}\log \frac{1}{\delta})k \log n)$ per edge. 

Using Consistent Subset Sampling instead, we obtain an alternative approach that may need considerably less samples. Assume w.l.o.g. that $k$ is even. We want to sample all subsets of $k$ vertices $(v_1,v_2,\ldots, v_k)$ that appear together in some $(k-1)$-star and it holds $h(v_1,v_2,\ldots,v_{k/2}) = h(v_{k/2+1}, v_{k/2+2},\ldots, v_{k}) \mbox{ mod } q$. (We remind the reader that $h(v_1,\ldots, v_\ell) = h(v_1) + \ldots + h(v_\ell) \mbox{ mod } q$ for a suitably defined hash function $h: V \rightarrow [q]$.) Remember that we consider only lexicographically ordered sets of vertices. Let be given a list of vertices $(u_1, u_2, \ldots, u_\Delta)$ incident to some  $u \in V$. In order to find all $(k-1)$-stars centered at $u$ that satisfy the sampling condition, we distinguish between two kinds of $(k-1)$-stars centered at $u$ depending on whether $u$ is among the first $k/2$ vertices in lexicographic order or not. Given the hash value $h(u)$, we then first look for subsets $W_1$ of $\lfloor (k-1)/2 \rfloor - 1$ vertices and $W_2$ of $\lceil (k-1)/2 \rceil$ vertices such that $h(W_1) + h(u) = h(W_2) \mbox{ mod } q$. This can be done by a straightforward extension of the approach in Section~\ref{sec:algorithm}. Similarly, we next look for subsets $Y_1$ of $\lfloor (k-1)/2 \rfloor$ vertices and $Y_2$ of $\lceil (k-1)/2 \rceil -1$ vertices such that $h(Y_1) = h(Y_2) + h(u) \mbox{ mod } q$. From the above discussion and the results from Theorem~\ref{thm:main} and Theorem~\ref{thm:frequent_itemsets_estim} we then obtain the following result:

\begin{thmx}
Let $G= (V, E)$ be a graph with $n$ vertices, $m$ edges and bounded degree $\Delta$ revealed as a stream of incidence lists. Let further $K_{k}$ be the number of $k$-cliques in $G$ and $S_{k-1}$ the number of $(k-1)$-stars in $G$ for $k \ge 3$. For any $\gamma, \varepsilon, \delta \in (0,1]$ there exits a randomized algorithm running in expected time $O(n\Delta^{\lceil k/2 \rceil} \log \log \Delta + \frac{\log n \log \delta^{-1}}{\gamma \varepsilon^2})$ and space $O(\Delta^{\lceil k/4 \rceil} + \frac{\log n \log \delta^{-1}}{\gamma \varepsilon^2})$ in one pass over the graph returning a value $\tilde{K}_{k}$ such that 
\begin{itemize}
\item if $K_{k}/S_{k-1} \ge \gamma$, $\tilde{K}_{k}$ is an $(\varepsilon, \delta)$-approximation of $K_{k}$.
\item otherwise, if $K_{k}/S_{k-1}  < \gamma$,  $\tilde{K}_{k} \le (1 + \varepsilon)K_{k}$ with probability at least $1 - \delta$.
\end{itemize} 
\end{thmx}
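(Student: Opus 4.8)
The plan is to reduce the problem to the frequent-itemset estimator of Theorem~\ref{thm:frequent_itemsets_estim}, using the $(k-1)$-star sampling scheme sketched above as the consistent-sampling primitive. First I would set up the reduction: treat each incidence list, i.e.\ a vertex $u$ together with its $\le\Delta$ neighbors, as a ``transaction'' over the ground set $V$, and treat $k$-subsets of vertices as the ``itemsets.'' The incidence list of $u$ generates exactly those $k$-subsets of the form $\{u\}\cup Q$, where $Q$ is a $(k-1)$-subset of $u$'s neighbors; these are precisely the vertex sets of $(k-1)$-stars centered at $u$. The crucial combinatorial fact is that a $k$-subset $S$ is the vertex set of a $(k-1)$-star centered at $u\in S$ iff the remaining $k-1$ vertices are all neighbors of $u$, so the number of distinct $(k-1)$-stars with vertex set $S$ equals the number of $u\in S$ all of whose companions are neighbors, and this count equals $k$ iff every pair in $S$ is an edge, i.e.\ iff $S$ is a $k$-clique. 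Thus ``$k$-clique'' is exactly the interestingness criterion ``star-frequency equal to $k$,'' an efficiently checkable criterion of the type the estimator admits, provided we know the exact frequency of each sampled $S$. (For $k=3$ this recovers wedge/triangle sampling.)

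Next I would verify that the primitive does supply exact frequencies. The sampling condition $h(v_1\ldots v_{\lfloor k/2\rfloor})=h(v_{\lfloor k/2\rfloor+1}\ldots v_k)$ depends only on the sorted vertex set $S$, so whether $S$ is sampled is independent of which center generated it; hence the sampling is \emph{consistent} in the sense of Section~\ref{sec:algorithm} --- every star with a sampled vertex set is recorded --- and after one pass we hold the exact star-frequency of each sampled $S$ and can test it against $k$. To enumerate, for a single incidence list, all sampled star-$k$-subsets, I would run the extension described just before the theorem: split on whether the fixed center $u$ lies in the first or the second lexicographic half of $S$, and in each case search for neighbor-subsets $W_1,W_2$ (resp.\ $Y_1,Y_2$) whose half-sums agree modulo $q$ after accounting for the fixed contribution $h(u)$. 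This is the algorithm of Section~\ref{sec:algorithm} with one coordinate frozen, so by Theorem~\ref{thm:main} (with $b=\Delta$ and one element pinned) each list costs expected time $O(\Delta^{\lceil k/2\rceil}\log\log\Delta)$ plus time proportional to its output, in space $O(\Delta^{\lceil k/4\rceil})$ reused across lists; odd $k$ uses the odd-$k$ extension from Section~\ref{sec:algorithm}.

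Finally I would invoke Theorem~\ref{thm:frequent_itemsets_estim} almost verbatim under the substitutions $b\mapsto\Delta$, $m\mapsto n$, $f\mapsto K_k$, $\alpha\mapsto\gamma$, and $z\mapsto$ the number of distinct $k$-subsets appearing as stars, so that $\tilde K_k=\tilde\alpha\,\tilde z$. The only delicate point is that the statement is phrased with the ratio $K_k/S_{k-1}$, whereas the estimator naturally controls $K_k/z$. The bridge is $z\le S_{k-1}$: since $S_{k-1}$ counts star occurrences while $z$ counts distinct vertex sets and each such set occurs as at least one star, $z\le S_{k-1}$. Hence $K_k/S_{k-1}\ge\gamma$ forces $K_k/z\ge\gamma$, which is case~1 of Theorem~\ref{thm:frequent_itemsets_estim} and gives the first bullet; and when $K_k/S_{k-1}<\gamma$, either $K_k/z\ge\gamma$ (case~1, a two-sided approximation, in particular $\tilde K_k\le(1+\varepsilon)K_k$) or $K_k/z<\gamma$ (case~2, $\tilde K_k\le(1+\varepsilon)K_k$), so the second bullet holds in all cases. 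For the complexity, the guessing range is governed by $1\le z\le{n\choose k}=O(n^k)$, so {\sc GuessingEstimator} uses $O(\log n)$ guesses (for fixed $k$) rather than $O(\log m)$; the total expected number of sampled stars is $O(p\,S_{k-1})=O(s\,S_{k-1}/z)$, which is $O(k\,s)$ because $S_{k-1}\le k z$ (each vertex set is a star at most $k$ times). Summing the per-list costs over the $n$ lists and adding the $O(\tfrac{\log n\log\delta^{-1}}{\gamma\varepsilon^2})$ sample-maintenance overhead yields the stated time and space.

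I expect the main obstacle to be the consistency-plus-frequency argument of the second paragraph together with the threshold translation of the third: one must argue carefully that the count the algorithm accumulates for a sampled $S$ is exactly its number of valid centers --- so the clique test is exact even though $S$ can arise from up to $k$ distinct incidence lists --- and that inflating $z$ to $S_{k-1}$ in the statement only strengthens the hypothesis. The index bookkeeping for placing the fixed center $u$ in the two halves (and the odd-$k$ case) is routine but must be arranged so that each star-$k$-subset is generated exactly once.
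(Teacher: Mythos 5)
Your proposal is correct and takes essentially the same route as the paper: consistent subset sampling run once per incidence list with the center vertex's hash value pinned into the collision condition, identifying a sampled vertex set as a $k$-clique exactly when its star-frequency equals $k$, and then invoking the frequent-itemset estimator of Theorem~\ref{thm:frequent_itemsets_estim} under the substitutions $b\mapsto\Delta$, $m\mapsto n$. You are in fact more explicit than the paper on the one delicate point---bridging the stated ratio $K_k/S_{k-1}$ to the ratio $K_k/z$ that the estimator actually controls, via $z\le S_{k-1}$---which the paper leaves implicit in its appeal to ``the above discussion'' and Theorems~\ref{thm:main} and~\ref{thm:frequent_itemsets_estim}.
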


Note that the number of required samples is a factor of $k$ smaller than in the extension of the triangle counting algorithm by Buriol et al. For graphs where the term $\Delta^{\lceil k/4 \rceil}$ is not dominating, our algorithm can thus be considerably more efficient.

\paragraph{\bf Estimating the number of bipartite cliques.}

Consistent Subset Sampling applies to counting the number of 4-cycles by estimating the number of distinct 2-adjacencies occurring at least two times in the incidence stream.

More generally, we can count non-induced $(i, j)$-bipartite cliques, in (directed and undirected) graphs given as an incidence stream. A non-induced bipartite clique, or biclique, in a graph $G = (V, E)$ is defined as $B = (X, Y, E')$ with $X, Y \subset V$, $X \cap Y = \emptyset$ and $(x, y) \in E'$  $\forall x \in X, y \in Y$. Bipartite cliques appear in many areas of computer science, for a list of applications the reader is referred to~\cite{biclique_appl}. 

Let us discuss known algorithms capable of counting $(i, j)$-bipartite cliques and how they compare to our approach. 
The approach in \cite{buriol_et_al_2} is aimed at counting (3,3)-bicliques and is based on a careful extension of the sampling approach for counting triangles from~\cite{buriol_et_al_1}. A direct comparison to their result is difficult and there is a subtle difference between the problems one can handle with sampling and with consistent sampling. 
The algorithm of Buriol et al. samples uniformly at random a $(1,3)$-star and then checks in the remaining stream whether the star is a subgraph of a $(3,3)$-biclique. The authors show this happens with probability $O(\frac{K_{3,3}}{K_{1,3}\Delta^2})$ where $K_{i,j}$ is the number of $(i, j)$-bicliques in $G$ and $\Delta$ is the maximum vertex degree in $G$. The algorithm can be easily generalized to counting $(i,j)$-bicliques. A concrete motivation for the considered problem is the detection of emerging web communities by analyzing the Web graph~\cite{kumaretal}. In this setting vertices refer to Web pages and their neighbors are other Web pages pointed to by them. An $(i,j)$-biclique shows that $i$ pages point to $j$ other pages and this is an indication of a community with common interests. The number of such communities is valuable information about the structure of the Web graph. Note however that popular Web pages like news portals, search engines and discussion forums can be pointed to by many, say $t$, distinct pages. This can cause a combinatorial explosion in the number of $(i, j)$-bicliques for larger $t$. One possible solution is to exclude such popular pages from consideration but this may result in incomplete information. We propose another solution, namely counting bipartite cliques with fixed right-hand size and variable left-hand size, or more precisely to count $(i^+, j)$-bicliques with at least $i$ vertices on the left size and exactly $j$ vertices on the right-hand side. Note that this problem cannot be handled by the sampling algorithm in \cite{buriol_et_al_2} since it will be biased towards sampling frequently occurring $j$-adjacencies. The more general algorithms from~\cite{bordino_et_al,subgraph_counting} also do not apply to the problem of counting $(i^+, j)$-bicliques.

Using consistent sampling we can handle this problem. The time and space bounds and accuracy of estimates are exactly the same as for frequent itemset mining. Let $\Delta$ be the maximum degree of the streamed graph $G$. We consider the incidence list of each vertex as a $\Delta$-set and vertices, we are then interested in $j$-subsets occurring in at least $i$ $\Delta$-sets. The $j$-subsets correspond to $j$-adjacencies. Thus, we obtain the following result.

\begin{thmx} \label{biclique_thm}
Let $G= (V, E)$ be a graph with $n$ vertices, $m$ edges and bounded degree $\Delta$ revealed as a stream of incidence lists. Let further $K_{i^+,j}$ be the number of $(i^+, j)$-bicliques in $G$ and $A_{j}$ the number of $j$-adjacencies in $G$ for $i \ge 1, j \ge 2$. For any $\gamma, \varepsilon, \delta \in (0,1]$ there exits a randomized algorithm running in expected time $O(n\Delta^{\lceil j/2 \rceil} \log \log \Delta + \frac{\log n \log \delta^{-1}}{\gamma \varepsilon^2})$ and space $O(\Delta^{\lceil j/4 \rceil} + \frac{\log n \log \delta^{-1}}{\gamma \varepsilon^2})$ in one pass over the graph returning a value $\tilde{K}_{i^+,j}$ such that 
\begin{itemize}
\item if $K_{i^+,j}/A_{j} \ge \gamma$, $\tilde{K}_{i^+,j}$ is an $(\varepsilon, \delta)$-approximation of $K_{i^+,j}$.
\item otherwise, if $K_{i^+,j}/A_{j}  < \gamma$,  $\tilde{K}_{i^+,j} \le (1 + \varepsilon)K_{i^+,j}$ with probability at least $1 - \delta$.
\end{itemize} 
\end{thmx}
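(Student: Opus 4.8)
The plan is to prove Theorem~\ref{biclique_thm} by a direct reduction to the frequent itemset estimation result of Theorem~\ref{thm:frequent_itemsets_estim}, since, after the right relabeling, counting $(i^+,j)$-bicliques in an incidence stream is literally an instance of frequent $j$-itemset counting. First I would make the correspondence precise: treat the incidence list of each vertex $u\in V$ as one transaction (a $\Delta$-set whose items are $u$'s neighbors), so that the $n$ vertices yield a stream of $m=n$ transactions, each of size at most $\Delta$. Under this identification a $j$-adjacency is exactly a $j$-itemset, and the frequency of a $j$-adjacency $Y$ — the number of incidence lists containing it — is precisely the number of vertices all of whose neighborhoods contain the fixed set $Y$ of $j$ vertices.

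The second step is to observe that these frequencies count bicliques in the required sense. If a $j$-adjacency $Y$ occurs in the incidence lists of vertices $u_1,\dots,u_\ell$, then $X=\{u_1,\dots,u_\ell\}$ together with $Y$ forms a non-induced biclique: every $u_s$ is adjacent to every vertex of $Y$ by definition. Hence $j$-adjacencies of frequency at least $i$ are in bijection with $(i^+,j)$-bicliques — each such biclique counted once via its $j$-vertex right side $Y$, with left side $X$ taken to be \emph{all} vertices adjacent to every vertex of $Y$ (so $|X|\ge i$) — giving $K_{i^+,j}=f$ and $A_j=z$ in the notation of Theorem~\ref{thm:frequent_itemsets_estim}, with $\gamma=K_{i^+,j}/A_j$ playing the role of $\alpha=f/z$. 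I would note that the disjointness condition $X\cap Y=\emptyset$ is automatic for a simple graph: since $Y\subseteq N(u_s)$ and there are no self-loops, $u_s\notin Y$, so no extra bookkeeping is needed (and for the directed case the same argument applies with $Y$ drawn from the out-neighbors).

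With the reduction in hand, the third step is simply to run the {\sc FrequentItemsetEstimator} machinery, feeding incidence lists as transactions and using the interestingness criterion ``frequency $\ge i$'' in place of the relative support threshold. Because {\sc ConsistentSubsetSampling} records every occurrence of a sampled $j$-subset, the exact frequency of each sampled $j$-adjacency is available, so this criterion is efficiently checkable exactly as required by the generalized version of Theorem~\ref{thm:frequent_itemsets_estim}. Substituting $m\mapsto n$, $b\mapsto\Delta$, $k\mapsto j$ and $\alpha\mapsto\gamma$ into that theorem's guarantees then yields both the stated time bound $O(n\Delta^{\lceil j/2\rceil}\log\log\Delta+\frac{\log n\log\delta^{-1}}{\gamma\varepsilon^2})$ and space bound $O(\Delta^{\lceil j/4\rceil}+\frac{\log n\log\delta^{-1}}{\gamma\varepsilon^2})$, together with the two-case accuracy statement verbatim.

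The only place where the argument is not a mechanical substitution — and thus the main thing to get right — is the combinatorial identity $K_{i^+,j}=\#\{Y:\text{freq}(Y)\ge i\}$, in particular checking that each distinct $j$-adjacency is counted once rather than once per incident vertex. This matches the fact that $z$ counts \emph{distinct} $j$-itemsets, and I would therefore state explicitly that $A_j$ denotes the number of distinct $j$-adjacencies so that the ratio $\gamma$ aligns with $\alpha=f/z$. Everything downstream — the sample-size choice $s=8/(\gamma\varepsilon^2)$, the Chebyshev and Markov bounds controlling under- and over-sampling, the geometric guessing of $z=A_j$, and the median-of-$\log\delta^{-1}$ amplification — is inherited unchanged from the proof of Theorem~\ref{thm:frequent_itemsets_estim}, so no new probabilistic analysis is required.
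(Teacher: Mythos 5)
Your proposal is correct and follows essentially the same route as the paper: the paper's own (very terse) argument is exactly this reduction, treating each incidence list as a $\Delta$-set, identifying $j$-adjacencies with $j$-itemsets and $(i^+,j)$-bicliques with $j$-subsets occurring in at least $i$ incidence lists, and then invoking the generalized form of Theorem~\ref{thm:frequent_itemsets_estim} with $m\mapsto n$, $b\mapsto\Delta$, $k\mapsto j$, $\alpha\mapsto\gamma$. Your write-up is in fact more careful than the paper's, since you make explicit the bijection (counting each biclique once via its right side $Y$ with maximal left side), the disjointness $X\cap Y=\emptyset$, and the use of the ``interestingness criterion'' generalization to handle the absolute threshold $i$.
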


\section{Conclusions}

Finally, we make a few remarks about possible improvements in the running time of our consistent sampling technique. As one can see, the algorithmic core of our approach is closely related to the $d$-SUM problem where one is given an array of $n$ integers and the question is to find $d$ integers that sum up to 0.  
The best known randomized algorithm for 3-SUM runs in time $O(n^2 (\log \log n)^2/\log^2 n)$\cite{baranetal}, thus it is difficult to hope to design an algorithm enumerating all 3-subsets satisfying the sampling condition much faster than in $O(b^2)$ steps.  Moreover, P\v{a}tra\c{s}cu and Williams \cite{patrwill} showed that solving $d$-SUM in time $n^{o(d)}$ would imply an algorithm for the 3-SAT problem running in time $O(2^{o(n)})$ contradicting the {\em exponential time hypothesis} ~\cite{eth}. 
It is even an open problem whether one can solve $d$-SUM in time $O(n^{\lceil d/2 \rceil - \alpha})$ for $d \geq 3$ and some constant $\alpha > 0$ ~\cite{woeg_survey}.

In a recent work Dinur et al.~\cite{dinur_et_al} presented a new ``dissection" technique for achieving a better time-space trade-off for the computational complexity of various problems. Using the approach from~\cite{dinur_et_al}, we can slightly improve the results from Theorem~\ref{thm:tradeoff}. However, the details are beyond the scope of the present paper. 

%
%
%
%
%



\end{document}